\documentclass[12pt]{article}
\usepackage{amsmath}
\usepackage{amssymb}
\usepackage{amsthm}
\usepackage{mathtools}
\usepackage{url}
\usepackage[usenames,svgnames]{xcolor}
\usepackage{enumitem}
\usepackage{comment}
\usepackage{mathdots}
\usepackage{graphicx}
\usepackage{float}
\usepackage{enumitem}
\usepackage[symbol]{footmisc}
\usepackage[pagebackref=false]{hyperref} 
\usepackage[all]{hypcap} 

\textwidth 16truecm
\textheight 22truecm
\topmargin -1truecm
\oddsidemargin -0truecm


\theoremstyle{plain}
\theoremstyle{definition}
\newtheorem{theorem}{Theorem}[section]
\newtheorem{lemma}[theorem]{Lemma}
\newtheorem{definition-theorem}[theorem]{Definition-Theorem}
\newtheorem{definition-proposition}[theorem]{Definition-Proposition}

\newtheorem{proposition}[theorem]{Proposition}
\newtheorem{corollary}[theorem]{Corollary}
\newtheorem{example}{Example}[section]
\newtheorem{examples}{Example}[subsection]

\newtheorem{remark}{Remark}[section]

\newtheorem{definition}{Definition}[section]

\numberwithin{equation}{section} 

\DeclareMathOperator{\End}{End}
\DeclareMathOperator{\Span}{span}
\DeclareMathOperator{\sgn}{sgn}

\DeclareMathOperator{\GL}{GL}
\DeclareMathOperator{\SL}{SL}
\DeclareMathOperator{\SO}{SO}


\def\ra{{\rightarrow}}
\def\lra{{\longrightarrow}}
\def\mt{{\mapsto}}

\def\det{\mathrm {det}}
\def\Det{\mathrm {Det}}
\def\mod{\mathrm {mod}}
\def\Pf{\mathrm {Pf}}
\def\Ca{\mathrm {Ca}}
\def\Pl{\grP\grl}
\def\rank{\mathrm {rank}}
\def\Spin{\mathrm {Spin}}
\def\End{\mathrm {End}}
\def\Image{\mathrm {Image}}

\def\span{\mathrm {span}}

\def\span{\mathrm {span}}

\def\Gr{\mathrm {Gr}}
\def\res{\mathop{\mathrm {res}}\limits}

\def\res{\mathop{\mathrm{res}}\limits}

\def\&{&{\hskip -20pt}}


\def\be{\begin{equation}}
\def\ee{\end{equation}}

\def\bea{\begin{eqnarray}}
\def\eea{\end{eqnarray}}

\def\bt{\begin{theorem}}
\def\et{\end{theorem}}

\def\bex{\begin{example}\small \rm}
\def\eex{\end{example}}

\def\bexs{\begin{examples}\small \rm}
\def\eexs{\end{examples}}

\def\ra{\rightarrow}
\def \ss {\subset}

\def\br{\begin{remark}\small \rm \em}
\def\er{\end{remark}}


\def\AA{{\mathcal A}}

\def\CC{{\mathcal C}}

\def\FF {{\mathcal F}}

\def\HH{{\mathcal H}}
\def\II {{\mathcal I}}

\def\NN{{\mathcal N}}
\def\OO{{\mathcal O}}

\def\QQ {{\mathcal Q}}

\def\SS{{\mathcal S}}


\def\Cb{\mathbf{C}}

\def\Ib{\mathbf{I}}

\def\Nb{\mathbf{N}}
\def\Ob{\mathbf{0}}
\def\Pb{\mathbf{P}}

\def\Zb{\mathbf{Z}}

\def\sb{\mathbf{s}}

\def\vb{\mathbf{v}}
\def\wb{\mathbf{w}}

\def\0b{\boldsymbol{0}}



\def\grH{\mathfrak{H}}

 \def\grl{\mathfrak{l}}

\def\grP{\mathfrak{P}}


\hyphenation{Grassmannian}
\hyphenation{Pfaffian}
\hyphenation{Determinant}
\hyphenation{Cartan}

\begin{document}
\baselineskip 16pt

\medskip
\begin{center}
\begin{Large}\fontfamily{cmss}
\fontsize{17pt}{27pt}
\selectfont
	\textbf{Isotropic Grassmannians, Pl\"ucker and Cartan maps}
	\end{Large}
	
\bigskip \bigskip
\begin{large}  F. Balogh$^{1, 2}$\footnote[1]{e-mail:balogh@crm.umontreal.ca}, 
J. Harnad$^{1, 3}$\footnote[2]{e-mail:harnad@crm.umontreal.ca}  
and J. Hurtubise$^{1, 4}$\footnote[3]{e-mail:jacques.hurtubise@mcgill.ca}
 \end{large}
 \\
\bigskip
\begin{small}
$^{1}${\em Centre de recherches math\'ematiques, Universit\'e de Montr\'eal, \\C.~P.~6128, succ. centre ville, Montr\'eal, QC H3C 3J7  Canada}\\
$^{2}${\em John Abbott College, Ste.~Anne de Bellevue, QC H9X 3L9 Canada }\\
$^{3}${\em Department of Mathematics and Statistics, Concordia University\\ 1455 de Maisonneuve Blvd.~W.~Montreal, QC H3G 1M8  Canada}\\
$^{4}${\em Department of Mathematics and Statistics, McGill University, \\ 805 Sherbrooke St.~W.~Montreal, QC  H3A 0B9 Canada }
\end{small}
 \end{center}
\medskip
\begin{abstract}
This work is motivated by the relation between the KP and BKP integrable hierarchies,
whose $\tau$-functions may be viewed as flows of sections of dual determinantal and Pfaffian line bundles
over infinite dimensional Grassmannians.  In finite dimensions,  we show how to relate the {\em Cartan map} 
which, for a vector space $V$ of dimension $N$,  embeds the  Grassmannian  $\Gr^0_V(V+V^*)$ 
of maximal isotropic subspaces of $V+ V^*$,  with respect to the natural scalar product,  
into the projectivization of the exterior space $\Lambda(V)$,  and the {\em Pl\"ucker map}, 
which embeds the Grassmannian $\Gr_V(V+ V^*)$  of all $N$-planes in $V+ V^*$  into the projectivization 
of $\Lambda^N(V + V^*)$. The  Pl\"ucker coordinates on $\Gr^0_V(V+V^*)$ are expressed 
bilinearly in terms of the {\em Cartan coordinates}, which are  holomorphic sections of the dual
 Pfaffian line bundle  $\Pf^* \ra \Gr^0_V(V+V^*, Q)$. In terms of affine coordinates on the big cell, 
 this is equivalent to an identity of  Cauchy-Binet type, expressing the determinants of square submatrices 
 of a skew symmetric $N \times N$  matrix as bilinear sums over the Pfaffians of their principal minors. 
 \end{abstract}

\section{Introduction: $\tau$-functions for integrable hierarchies and Grassmannians} 
\label{KP_BKP_tau}

We recall the relation between integrable hierarchies and infinite dimensional Grassmannians
developed by Sato and the Kyoto school \cite{Sa, DJKM1, DJKM2}.  (For expository accounts, see \cite{JM, Dick, HB}.)
Solutions to the  KP (Kadomtsev-Petviashvili) hierarchy can be expressed in terms of KP $\tau$-functions 
$\tau_w({\bf t})$, parametrized by elements $w \in \Gr_{\HH_+}(\HH)$ of an infinite Grassmannian consisting
of subspaces $w \ss \HH$ of a polarized Hilbert space $\HH=\HH_+ + \HH_-$, commensurate with
the subspace $\HH_+\ss \HH$,  and depending on an infinite sequence of commuting flow variables
\be
{\bf t} = (t_1, t_2, \dots).
\ee
These satisfy the Hirota bilinear residue relations,
\be
\res_{z=0}\left(e^{\sum_{i=1}^\infty (t_i-s_i)z^i} \tau_w({\bf t} - [z^{-1}])\tau(\sb + [z^{-1}])  \right)dz  = 0,
\label{hirota_bilinear_tau_res}
\ee
identically in ${\bf s}$,  where
\be
{\bf s} :=(s_1, s_2, \dots),  \quad [z^{-1}] := \left({1\over z}, {1\over 2z^2},  \dots,{1\over j z^j}, \dots \right).
\ee

Expanding $\tau_w({\bf t})$  in a basis of Schur functions \cite{Mac, Sa}
\be
\tau_w({\bf t}) = \sum_{\lambda}\pi_\lambda(w)s_\lambda({\bf t}),
\label{tau_schur_exp}
\ee
with the flow parameters $(t_1, t_2, \dots)$ interpreted as normalized power sums
\be
t_i = {p_i\over i} \quad i=1,2 \dots,
\ee
and  the labels $\lambda$ denoting integer partitions $\lambda =(\lambda_1 \ge \lambda _2 \ge\cdots \ge \lambda_{\ell(\lambda)} >0, \cdots)$
of  length $\ell(\lambda)$ and weight $|\lambda|$,  the coefficients $\pi_\lambda(w)$ may be  interpreted 
as {\em Pl\"ucker coordinates} of the element  $w \in \Gr_{\HH_+}(\HH)$.
 
 The Hirota equation (\ref{hirota_bilinear_tau_res}) is then
formally equivalent to the infinite set of {\em Pl\"ucker relations}
\be
\sum_i(-1)^{i+\mu(\lambda_i-i+1)}\pi_{[\lambda^{-},\lambda_i]}(w)\pi_{[\mu^+,\lambda_i-i+1]}(w) =0,
\label{Plucker_relations_explicit}
\ee
where $(\lambda,\mu)$  is any pair of partitions,
$[\lambda^{-},\lambda_i]$ is the partition of length $\ell(\lambda)-1$ with parts
\be
[\lambda^{-},\lambda_i] := (\lambda_1+1,\lambda_2+1,\dots,\lambda_{i-1}+1,\lambda_{i+1},\dots,\lambda_{\ell(\lambda)}),
\label{Plucker_lambda_minus}
\ee
and $[\mu^{+},u]$ is the partition of length $\ell(\mu) +1$ with parts
\be
[\mu^{+},u]:=\begin{cases}
(u+1,\mu_1,\mu_2,\mu_3,\dots,\mu_{\ell(\mu)}) & \text{if } u> \mu_1-2\cr
(\mu_1-1,u+2,\mu_2,\mu_3,\dots,\mu_{\ell(\mu)}) & \text{if } \mu_2-3<u<\mu_1-2\cr
(\mu_1-1,\mu_2-1,u+3,\mu_3,\dots,\mu_{\ell(\mu)}) & \text{if } \mu_3-4<u<\mu_2-3\cr
\vdots & \vdots \cr
(\mu_1-1,\mu_2-1,\mu_3-1,\dots,\mu_{\ell(\mu)-1}-1,u+\ell(\mu)) & \text{if } u<\mu_{\ell(\mu) -1}-\ell(\mu),
\end{cases}
\label{Plucker_mu_plus}
\ee
which is not defined if $u=\mu_k-k-1$ for some $1\leq k \leq \ell(\mu)$. The summands corresponding to indices $i$ in \eqref{Plucker_relations_explicit} for which $[\mu^+,\lambda_i-i+1]$ is not defined are 
omitted from the sum.
 The exponent $\mu(u)$ in (\ref{Plucker_relations_explicit}) is defined as the position $k$ 
 of the inserted part $u+k$ in \eqref{Plucker_mu_plus}.

The Pl\"ucker relations \eqref{Plucker_relations_explicit} determine the image of the infinite Grassmannian
 $\Gr_{\HH_+}(\HH)$
under the {\em Pl\"ucker map}:
\bea
 \Pl_{\HH_+, \HH}: \Gr_{\HH_+}(\HH) &\&\ra \Pb(\FF) \cr
\Pl_{\HH_+, \HH}: \Span\{w_1, w_2, \dots \} &\&\mapsto \left[w_1 \wedge w_2 \wedge \cdots \right] 
=\big[\sum_\lambda \pi_\lambda(w) |\lambda; N\rangle\big] \in \Pb(\FF) ,
\eea
embedding $\Gr_{\HH_+}(\HH)$ into the projectivization of the semi-infinite
wedge \hbox{product} space  (the fermionic Fock space)
\be
\FF = \Lambda^{\infty/2}(\HH) =\sum_{N\in \Zb}\FF_N.
\ee
Here $\{|\lambda;N\rangle\}$ is the standard basis \cite{Sa, SW, HB} for the fermionic charge-$N$ sector
$\FF_N$ of the Fock space $\FF$ and $\{w_1, w_2, \dots \} $ is any admissible basis for the
subspace $w \ss \HH$, viewed as an element of $Gr_{\HH_+}(\HH)$. 
As in the finite dimensional case (detailed below in Sections  \ref{Det_Pfaff_line_bundles} and \ref{plucker_cartan}),
 the Pl\"ucker coordinates $\{\pi_\lambda(w)\}$  are expressible as determinants of suitably defined matrices,
 $W_\lambda(w)$, which are maximal minors of the homogeneous coordinate matrix $W(w)$ of the element $w$.
They may be interpreted  as  holomorphic sections of the (dual) determinantal line bundle  $\Det^*\ra \Gr_{\HH_+}(\HH)$ (see \cite{SW, HB}).

The BKP hierarchy \cite{DJKM1, DJKM2} may similarly be expressed in terms of a BKP $\tau$-function $\tau_{w^0}^B({\bf t}_B)$,
parametrized by elements $w^0 \in \Gr^0_{\HH_+}(\HH^B, Q)$ of the  Grassmannian of maximal isotropic subspaces 
of a Hilbert space $\HH^B$ endowed with a suitably defined complex scalar product $Q$. 
It depends only on the odd flow variables
\be
{\bf t}_B = (t_1, t_3, \dots )
\ee
and satisfies the Hirota bilinear residue equation
\be
\res_{z=0}\left(e^{\tilde{\xi}(z, {\bf t}_B - \sb_B)} \tau_{w^0}^B({\bf t}_B - 2[z^{-1}]_B)\tau_{w^0}^B(\sb_B + 2[z^{-1}]_B){dz\over z} \right)
= \tau_{w^0}^B({\bf t}_B)\tau_{w^0}^B(\sb_B),
\label{hirota_bilinear_tau_res_BKP}
\ee 
identically in
\be
{\bf s}_B = (s_1, s_3, \dots),
\ee
where
\be
\tilde{\xi}(z, {\bf t}_B) := \sum_{j=1}^\infty t_{2j -1} z^{2j -1}, \quad
[z^{-1}]_B := \left(z^{-1}, {1\over 3} z^{-3}, {1\over 5}  z^{-5}, \dots \right).
\ee

We may similarly expand $\tau^B_{w^0}({\bf t}_B)$ in a series \cite{DJKM1,  You, Shig1, Shig2}
\be
\tau_{w^0}^B({\bf t}_B) = \sum_{\alpha \, \in \{ \text{even strict partitions}\}} \kappa_\alpha (w^0)\QQ_\alpha({\bf t}_B)
\label{Q_schur_expansion}
\ee
where, up to normalization
\be
\QQ_{\alpha}({\bf t}_B):=\frac{1}{\sqrt{2^{r}}}Q_{\alpha}\left(\frac{{\bf t}_B}{2}\right)
\ee
are Schur's $Q$-functions  \cite{Mac} (also known as projective Schur functions),
 labelled by strict partitions $\alpha = (\alpha_1 > \alpha_2 > \cdots \> \alpha_r \ge 0)$.
 The coefficients $\{\kappa_\alpha(w^0)\}$ may be interpreted as Pfaffians of skew symmetric matrices 
 $A^\emptyset(w^0)(\alpha)$,
also labelled by strict partitions, which are, up to projectivization,  principal minors of the skew symmetric affine 
coordinate matrix $A^\emptyset(w^0)$ representing the image of  $w^0\in \Gr_{\HH}^0(\HH^B)$ within the ``big cell'',  under the
{\em Cartan map} (\cite{Ca, HS1, HS2} and \cite{HB}, Chapt. 7,  Appendix E)
\be
\Ca_\HH: \Gr^0_\HH(\HH^B) \ra \Pb(\FF^B)
\ee
embedding the isotropic Grassmannian $\Gr^0_\HH(\HH^B)$ into the projectivization of the ``neutral fermion'' Fock space $\FF^B$ \cite{DJKM1, DJKM2}. 
(See Section \ref{Det_Pfaff_line_bundles}, eq.~(\ref{Cartan_map_def}) for the definition in finite dimensions.)

The coefficients  $\{\kappa_\alpha(w^0)\}$,  which we refer to as {\em Cartan coordinates} (cf. \cite{HB}, Chapt. 7), are similarly 
interpreted as sections of a holomorphic line bundle: the dual Pfaffian line bundle $\Pf^* \ra \Gr^0_\HH(\HH^B)$ 
over the isotropic Grassmannian $\Gr^0_\HH(\HH^B)$, which will be defined below in the finite dimensional setting
originally studied by Cartan. They also satisfy quadratic relations that determine the image of  $\Gr^0_\HH(\HH^B)$
under the Cartan map,  the {\em Cartan relations} (\cite{Ca,  DJKM1, HS1, HS2} and \cite{HB}, Chap.~7 and Appendix C)
(or the {\em Pfaffian Pl\"ucker relations}, as they are called in \cite{Oh, Shig2}):
\bea
&\&\sum_{i=1}^{\ell(\alpha)} (-1)^{i + \beta(\alpha_i)}  \kappa_{(\alpha^-, \alpha_i)}\kappa_{(\beta^+, \alpha_i)} + 
\sum_{i=1}^{\ell(\beta)} (-1)^{i +\alpha(\beta_i)} \kappa_{(\beta^-, \beta_i)}\kappa_{(\alpha^+, \beta_i)}\cr
&\& =\frac{1}{2}\left((-1)^{\ell(\alpha)+\ell(\beta)} -1\right)\kappa_{\alpha}\kappa_{\beta}.
\label{BKP_cartan_equations}
\eea
Here $(\alpha, \beta)$ is any pair of strict partitions of lengths $(\ell(\alpha),\ell(\beta))$.
For a strict partition $\alpha = (\alpha_1 > \alpha_2 > \cdots > \alpha_{\ell(\alpha)}\ge 0)$
and any nonnegative integer $m$ lying between $\alpha_i$ and $\alpha_{i+1}$:
\be
\alpha_i > m > \alpha_{i+1},
\label{ordering_alpha_i}
\ee
 $(\alpha^+, m)$ is defined to be the strict partition of length $\ell(\alpha)+1$ obtained from $\alpha$ by adding the part $m$:
\be
(\alpha^+, m) := (\alpha_1, \dots, \alpha_i, m, \alpha_{i+1}, \dots \alpha_{\ell(\alpha)})
\ee 
while, for any $\alpha_i$, $(\alpha^-, \alpha_i)$ is defined as the strict partition\index{partitions} of length
$\ell(\alpha)-1$ obtained from $\alpha$ by omitting the part $\alpha_i$:
\be
(\alpha^-, \alpha_i) = (\alpha_1, \dots, \alpha_{i-1}, \alpha_{i+1}, \dots , \alpha_{\ell(\alpha)}).
\ee
For $m\in \Nb$, the integers $\alpha(m), \beta(m)$ are defined to be  the number of parts of $\alpha$ and $\beta$,
respectively, greater than $m$.

It is a standard result \cite{DJKM1, DJKM2} that the square $(\tau_{w^0}^B({\bf t}_B))^2$ of a BKP $\tau$-function is equal 
to the restriction of a KP $\tau$-function $\tau_w({\bf t})$ to the values
\be
{\bf t}' := (t_1, 0, t_3, 0, t_5, 0, \dots ).
\ee
This implies an identity expressing the Pl\"ucker coordinates $\pi_\lambda(w^0)$ quadratically in terms of the Cartan coordinates $\kappa_\alpha(w^0)$.
 In the setting of finite dimensional Grassmannians, this quadratic relation is given  by Theorem \ref{main_theorem}  
 of Section \ref{Det_Pfaff_line_bundles}. It is closely related to Cartan's identification  of maximal isotropic subspaces 
 of a complex Euclidean vector space with projectivized {\em pure spinors} \cite{Ca}.

For a vector space $V$ of dimension $N$, the dual determinantal and Pfaffian line bundles 
\be
\Det^*\ra \Gr_V(V+V^*)\quad \text{ and }\quad  \Pf^*\ra \Gr^0_V(V + V^*)
\ee
over the  Grassmannian $\Gr_V(V+V^*)$ of $N$-dimensional 
subspaces of the $2N$-dimensional space $V + V^*$ and the maximal isotropic Grassmannian $\Gr^0_V(V + V^*)$
with respect to the tautologically defined scalar product, respectively,  are defined in Section \ref{Det_Pfaff_line_bundles},
as well as the Pl\"ucker and Cartan maps embedding these into $\Pb(\Lambda^N(V+V^*)$ and $\Pb(\Lambda(V))$, respectively,
\bea
\Pl_V: \Gr_V(V+ V^*)&\& \hookrightarrow  \Pb(\Lambda^N(V+V^*), 
\label{plucker_map1}
 \\
&\& \cr
\Ca_V: \Gr^0_V(V+ V^*)&\& \hookrightarrow  \Pb(\Lambda(V)),
\label{cartan_map1}
\eea
with images determined  by the {\em Pl\"ucker relations} and 
the {\em Cartan relations}, respectively.

Section \ref{Det_Pfaff_line_bundles} reviews the construction of  (dual) determinantal and Pfaffian lines bundles 
$\Det^* \ra \Gr_V(V + V^*)$ and $\Pf^*_{\pm} \ra \Gr^{0\pm}_V(V + V^*)$ over the Grassmannian
$\Gr_V(V + V^*)$ of $N$-planes in the direct sum of a (complex) $N$-dimensional vector space $V$ and its dual $V^*$, and
the Grassmannian $\Gr^{0\pm}_V(V + V^*)$ of maximal isotropic subspaces of $V + V^*$ with
respect to the canonically defined scalar product $Q$ associated to the dual pairing, respectively. The latter  is related to
the hyperplane section bundle over the (projectivized) irreducible Clifford module associated to
the  scalar product $Q$, and its connected components over the irreducible $1/2$-spinor modules. Theorem \ref{main_theorem} gives
the main result, expressing the Pl\"ucker map (\ref{plucker_map1}) bilinearly in terms of the Cartan map (\ref{cartan_map1}),
 and thereby effectively inverting the Cartan embedding. Section \ref{proof_main_theorem} provides a direct proof of this theorem, 
 without the use of Pl\"ucker or Cartan coordinates.
 Proposition \ref{plucker_iota_cartan}, Section \ref{plucker_k_N_factoriz}, gives a factorization of the Pl\"ucker map 
 \be
 \Pl_k(V) :\Gr_k(V)\ \ra \Pb(\Lambda^k(V))
 \ee
  as the composition of  the tautological embedding map: 
 \be
 \iota_V:\Gr_k(V)\ra \Gr^0_V(V+ V^*)\
 \ee
 with the Cartan map.  The Cartan  coordinates $\kappa_\alpha(w^0)$  are defined  in Section \ref{plucker_cartan}  and 
 expressed as  Pfaffians of principal  minors of the affine coordinate matrix  $A^\emptyset(w^0)$ on 
the ``big cell'' of the isotropic  Grassmannian $\Gr^0_V(V+ V^*)$.  Theorem \ref{pfaffian_CB_identity}, Section \ref{pfaffian_CB},  interprets 
 Theorem \ref{main_theorem} in coordinate form as a Pfaffian analog of the Cauchy-Binet identity \cite{Mac} and gives
  an alternative proof, using inner and outer products on the exterior algebra $\Lambda(V+V^*)$.
  \br
  A number of other identities relating Pfaffians and determinants formed from skew matrices have been studied in the literature (see e.g. \cite{Ok} 
and references therein), but  none of these seem to coincide with the results of Theorems \ref{main_theorem} and \ref{pfaffian_CB_identity}.
  \er

\section{Pl\"ucker and Cartan maps:  determinantal and \\ Pfaffian line bundles}
\label{Det_Pfaff_line_bundles}

Let $V$ be a complex vector space of dimension $N$, $V^*$ its dual space and
$\Gr_V(V+V^*)$  the Grassmannian of $N$-planes in $V+ V^*$.
The {\em Pl\"ucker map}  \cite{GH}
\be
\Pl_V:\Gr_V(V +V^*) \ra \Pb(\Lambda^N(V+V^*)) 
\ee
is  the $\GL(V+ V^*)$ equivariant embedding  of $\Gr_V(V+V^*)$ 
in the projectivization $ \Pb(\Lambda^N(V+V^*))$ of the exterior space $\Lambda^N(V+V^*)$ 
defined   by:
\be
\Pl_{V} : w \mapsto [w_1 \wedge \cdots \wedge w_N] \in \Pb(\Lambda^N(V+V^*)),
\ee
where  $\{w_1, \dots, w_N\}$ is a basis for the subspace $w\in \Gr(V+V^*)$.
 Its image is cut out by the intersection
of a number of quadrics, the {\em Pl\"ucker quadrics}, defined by the {\em Pl\"ucker relations} (\cite{GH} Chapt. I.5 
and eq.~(\ref{Plucker_relations_explicit}) above).

 Let $\{e_i\}_{i=1, \dots, N}$  be a chosen basis for $V$   and $\{f_i\}_{i=1, \dots, N}$ the dual basis for $V^*$,
   \be
  f_i(e_j)= \delta_{ij},
  \ee
  Denote by $(e_1, \dots, e_{2N})$ the basis for $V+ V^*$ in which
\be
e_{N+i} := f_i, \quad 1\le i \le N.
\ee 
 Define the basis $\{|\lambda\rangle\}$ for $\Lambda^N(V+V^*)$  by
\be
|\lambda\rangle := e_{l_1} \wedge \cdots \wedge e_{l_N},
\ee
where $\lambda$ is any partition whose Young diagram fits in the $N\times N$ square diagram,
 and
\be
l_j := \lambda_j -j +N +1, \quad 1\le j \le N
\ee
 are the {\em particle positions}  (see, e.g., \cite{HB}, Chapt. 11, Sec. 11.3) associated to the partition
 \be
 \lambda = (\lambda_1, \dots, \lambda_{\ell(\lambda)}, 0, \dots ).
 \ee
 Thus $l_1 > \cdots > l_N$ is a strictly decreasing sequence of positive integers between $1$ and $2N$.
The (complex) scalar product on $\Lambda^N (V + V^*)$ is defined, in bra/ket notation,
 by requiring the  $\{|\lambda\rangle\}$ basis to be orthonormal
\be
\langle\lambda | \mu\rangle  =\delta_{\lambda\mu}.
\ee

 Following Cartan \cite{Ca}, we define the natural complex scalar product $Q$ on $V + V^*$ by
\be
Q((X, \mu), (Y, \nu)) = \nu(X)  + \mu(Y),
\ee
where $X, Y \in V$, $\mu, \nu \in V^*$.
The standard irreducible representation 
\bea
\Gamma: \CC_{V+V^*, Q}&\&\ra  \End(\Lambda(V)),  \cr
\Gamma: \sigma &\&\mapsto \Gamma_\sigma
\eea
of the Clifford algebra $\CC_{V+V^*,Q}$ on $V+ V^*$ determined by 
the scalar product $Q$ is generated by the linear elements, which are defined by exterior and interior multiplication:
\bea
\Gamma_{v + \mu}:= v\wedge &\&+ i_{\mu}  \  \in \End(\Lambda(V))  \cr
v \in V, &\& \quad \mu\in V^*.
\label{Gamma_w}
\eea

The Clifford representations of the basis elements are denoted 
\be
\psi_i := \Gamma_{e_i} = e_i \wedge, \quad \psi^\dag_i:= \Gamma_{f_i} = i_{f_i}, \quad i=1, \dots, N,
\ee
and viewed as finite dimensional fermionic creation and annihilation operators,
satisfying the anticommutation relations 
\be
[\psi_i, \psi_j]_+=0,\quad [\psi^\dag_i, \psi^\dag_j]_+=0, \quad [\psi_i, \psi^\dag_j]_+= \delta_{ij}.
\ee

Let
\be
\Gr^0_V(V+V^*) \ss \Gr_V(V+V^*)
\ee
be the sub-Grassmannian of $N$-planes in $V + V^*$ that are totally isotropic with respect to $Q$.
That is, if $\{w_1, \dots, w_N\}$ is a basis for an element $w^0\in \Gr^0_V(V+V^*)$, then
\be
Q(w_i, w_j) =0, \quad 1\le i,j \le N.
\label{null_quad_rels}
\ee
It  then follows from (\ref{null_quad_rels}) and the Clifford algebra relations that
\be
\Gamma_{w_i} \Gamma_{w_j} + \Gamma_{w_j} \Gamma_{w_i}  =0, \quad 1\le i,j \le N.
\ee
Together with the linear independence of the $w_j$'s, this implies  \cite{Ca} that
\be
\rank\left(\prod_{j=1}^N \Gamma_{w_j}\right) =1.
\ee

\begin{definition}
\label{cartan_map}
The {\em Cartan map}  $\Ca_V: \Gr^0_V(V+V^*)\ra \Pb(\Lambda(V))$,
defined  \cite{Ca, HS1, HS2} by:  
\be
\Ca_V : w^0\mapsto \Image\left(\prod_{j=1}^N \Gamma_{w_j}\right) \in \Pb(\Lambda V).
\label{Cartan_map_def}
\ee
gives an equivariant embedding of the isotropic  Grassmannian $\Gr_V^0(V+V^*)$  
into the projectivization $\Pb(\Lambda(V))$ of the exterior space $\Lambda(V)$ (the irreducible Clifford module).
\end{definition}
It intertwines the action of the orthogonal group $SO(V+V^*,Q )$  on $\Gr^0_V(V+V^*)$
with the (projectivized) representation of the spin group $\Spin(V+V^*, Q)$
on $\Pb(\Lambda(V))$ determined by the Clifford representation.  Its image $\Ca_V(\Gr^0_V(V + V^*))$  
consists of  (the projectivization of) {\em pure spinors} \cite{Ca, Ch}, which are the elements of the lowest dimensional
stratum of the $\Spin(V+V^*, Q)$ representation on $\Lambda(V)$.
Similarly to the Pl\"ucker map, its image is cut out by a set of homogeneous quadratic relations, 
which we refer to as the {\em Cartan relations}.  
(Cf. \cite{Ca}, Secs. 106-107,  eq.~(\ref{BKP_cartan_equations}) above, and \cite{HB}, Appendix E.)

The irreducible Clifford module  $\Lambda (V)$ is the direct sum of  the two
 irreducible ${1\over 2}$-spinor modules (Weyl spinors)
 \be
 \Lambda(V) = \Lambda_+(V) \oplus \Lambda_-(V),
 \label{spinor_mod_sum}
 \ee
 consisting of even $(+)$ and odd $(-)$ multivectors $\vb \in \Lambda(V)$,  denoted  $\vb_+$ and  $\vb_-$, respectively.
The isotropic Grassmannian $\Gr^0_V(V+V^*)$ is the disjoint union of
 two connected components
 \be
 \Gr^0_V(V+V^*) = \Gr^0_{V_+}(V+V^*)\sqcup  \Gr^0_{V_-}(V+V^*),
 \ee 
such that
\be
\Ca_V(w^0)(\Gr^0_{V_\pm}(V+V^*)) \ss \Pb(\Lambda_\pm(V) ).
\ee
In particular the Cartan image of the element $V\in \Gr_V^0(V+V^*)$ is
\be
\Ca_V(V) = [\Omega_V],
\label{cartan_V}
\ee
where $\Omega_V \in \Lambda^NV$ is a volume form on $V$.

We then have six standard holomorphic line bundles:
\begin{enumerate}
\item The hyperplane section bundle $\OO(1) \ra \Pb(\Lambda^N(V+V^*))$  dual to the tautological bundle.
\item The pair of hyperplane section bundles $\OO(1) \ra \Pb(\Lambda_\pm(V))$.
\item The dual determinantal line bundle $\Det^* \ra \Gr_V(V + V^*)$.
\item The pair of dual Pfaffian line bundles $\Pf^*_{\pm} \ra \Gr^{0}_{V_\pm}(V + V^*)$.
\end{enumerate}

Bundles  $1$ and $3$,  and $2$ and $4$ are related by pullback under the Pl\"ucker and Cartan maps, respectively:
\bea
\Det^*&\&= \Pl_V^*\left(\OO(1)(\Pb(\Lambda^N(V+V^*)))\vert_{\Pl_V(\Gr_V(V+V^*)})\right), \cr
\Pf^*_{\pm} &\&= \Ca_V^*\left(\OO(1)(\Pb(\Lambda_{\pm}(V)))\vert_{\Ca_V(\Gr^{0\pm}_V(V+ V^*))}\right).
\eea
The dimensions of their spaces of holomorphic sections are
\bea
h^0(\Gr_V(V+V^*), \Det^*)&\&=h^0(\Pb(\Lambda^N(V + V^*)), \OO(1)(\Pb(\Lambda^N(V + V^*)))=\left( 2N \atop N\right), \cr
h^0(\Gr^{0}_{V_\pm}(V+V^*)), \Pf^*_{\pm}) &\&=h^0(\Pb(\Lambda_{\pm}(V)), \OO(1)(\Pb(\Lambda_{\pm}(V))))= 2^{N-1}.
\eea
Bundles $3$ and $4$ are related by
\be
(\Pf^*_\pm)^{\otimes 2} \simeq\Det^*\big |_{\Gr^0_{V\pm}(V+ V^*)}.
\label{Pfaff_2_Det}
\ee

From the viewpoint of geometric representation theory (\cite{Fu}, Chapt.~9),
the space $\Lambda^N(V + V^*)$ is the representation module  of  $\SL(V+V^*, \Cb)$ (and, by restriction, $\Spin(V+ V^*, Q)$)
 obtained from the Borel-Weil theorem by identifying it with the space of holomorphic sections 
 of the line bundle $\Det^* \ra \Gr_V(V+V^*)$. Denote by $V_{\pm}$ the subspaces: 
 \bea
 V_+    &\&:=\begin{cases} V=\span(e_1, \dots , e_N)\in \Gr^0_{V_+}   \text{ if }  N \text{ is even}, \\
 \span(e_1, \dots, e_{N-1}, f_N) \in \Gr^0_{V_+}    \text{ if }  N \text{ is odd} , \end{cases} \cr
 V_-   &\&:=\begin{cases} V= \span(e_1, \dots , e_N)  \in \Gr^0_{V_-} \text{ if } N \text{ is odd}, \\\
  \span(e_1, \dots, e_{N-1}, f_N) \in \Gr^0_{V_-}   \text{ if } N \text{ is even}. \end{cases}
 \eea
 Let $P^{\pm} \ss\SO(V+ V^*, Q)$ be the stabilizers of $V_{\pm}$, and
 $\tilde{P}^{\pm} \ss \Spin(V+ V^*, Q)$ the corresponding stabilizers of their
 images $\Ca(V_{\pm}) \ss \Lambda_{\pm} (V)$ under the Cartan map.
 The components $\Gr^0_{V_\pm}(V+ V^*) $ are the orbits of $V_\pm$ under  $ \SO(V+ V^*, Q)$,
 and hence are identified with the homogeneous spaces 
\bea
 \Gr^0_{V_+}(V+ V^*) &\&= \SO(V+ V^*, Q)/P^+ = \Spin(V+ V^*, Q)/\tilde P^+\\
  \Gr^0_{V_-}(V+ V^*)&\&= \SO(V+ V^*, Q)/P^- = \Spin(V+ V^*, Q)/\tilde P^-.
\eea

As a representation of $\SO(V+ V^*, Q)$ (and  $\Spin(V+ V^*, Q)$), the  module
$\Lambda^N(V + V^*)$ decomposes into the direct sum of two irreducible modules
\be
\Lambda^N(V + V^*) = \Lambda^N_+(V + V^*) \oplus \Lambda^N_-(V + V^*), 
\ee
such that
\be
\Pl(V_{\pm}) \ss \Pb(\Lambda^N_\pm(V + V^*)).
\ee
Let $\Det^*_\pm$ be the restriction of $\Det^*$ to $\Gr^0_{V_\pm}(V+ V^*)\subset \Gr_V(V+ V^*)$. Then
\be
\Pl(\Gr^0_{V_ \pm}(V+V^*))\ss \Pb(\Lambda^N_\pm (V + V^*))
\ee
 and
 \be
\Det^*_\pm= \Pl_V^*\left(\OO(1)(\Pb(\Lambda^N_\pm(V+V^*)))\vert_{\Pl_V(\Gr_V(V+V^*)})\right).
\ee

The standard diagonal Cartan subalgebra $\grH$ of $\SO(V+V^*,Q)$ is isomorphic to $V$, by the map $v\mapsto diag(v, -v)$.  We can take  the element
\be
e_1\wedge e_2\wedge \cdots \wedge e_N \in \Lambda^N_{(-1)^{N}}(V) \subset  \Lambda^N_{(-1)^{N}}(V + V^*)
\ee
 as a generator of the highest weight space for $ \Lambda^N_{(-1)^{N}}(V)$, and  
 \be
 e_1\wedge e_2\wedge \cdots \wedge e_{N-1}\wedge f_N  \in  \Lambda^N_{(-1)^{N+1}}(V + V^*)
 \ee
 as a generator of the highest weight space for 
$ \Lambda^N_{(-1)^{N+1}}(V)$.
 The highest weight on $\grH$ corresponding to $e_1\wedge e_2\wedge..\wedge e_N$ is then  $(f_1+...+f_N)$, and the highest weight on $\grH$ corresponding to $e_1\wedge e_2\wedge..\wedge e_{N-1}\wedge f_N$ is  $(f_1+...+f_{N-1}- f_N)$. 
  If $\chi_{\pm}$ are the characters on the Cartan subgroup associated to these weights, the line bundles  $\Det^*_\pm$ over the homogeneous spaces $\Gr^0_{V_\pm}(V+ V^*)=  \Spin(V+ V^*, Q)/\tilde P^\pm$  can be constructed in the standard way, as quotients by $\tilde{P}^\pm$  of the trivial bundle $ \Spin(V+ V^*, Q)\times_{\tilde P^\pm}\Cb \ra \Spin(V+ V^*, Q) $, 
 where the parabolic subgroups $\tilde{P}^\pm$ act  on $\Spin(V+ V^*, Q)$ by right multiplication and on $\Cb$ by
 multiplication by the inverses $\chi^{-1}_\pm$ of the characters, 

 Likewise, the spin module $\Lambda(V)$ decomposes into the 
sum  (\ref{spinor_mod_sum}) of the even and odd irreducible ${1\over 2}$-spinor modules  (Weyl spinors)
corresponding to sections of the line bundles $\Pf^*_\pm$ over the two 
connected components  $\Gr^0_{V_\pm}(V+V^*)$ of the isotropic Grassmannian. 
The highest weight of $\Lambda_{(-1)^N}(V)$ is $(f_1+ \dots +f_N)/2$, and that of  
$\Lambda_{(-1)^{N+1}}(V)$ is $(f_1+ \dots +f_{N-1}-f_N)/2$. If $\chi_{0,\pm}$ are the corresponding characters, the line bundles
 $\Pf^*_\pm$ are built in the same way as $\Det^*$, but using $\chi_{0,\pm}$  instead of $\chi_\pm$, and the identification 
 of $\Gr^0_{V_\pm}(V+ V^*)$ as  $\Spin(V+ V^*, Q)/\tilde P^\pm$.

The fact that the restriction of  $(\chi_{0,\pm})^2$ to $\Gr^0_{V_\pm}(V+V^*)$ is equal to $ \chi_\pm$,
\be
(\chi_{0,\pm})^2 \vert_{\Gr^0_V(V+V^*)^\pm}= \chi_\pm,
\ee
 then gives the isomorphisms (\ref{Pfaff_2_Det}).
 On the spaces of holomorphic sections of these bundles, i.e., the representation modules, there are corresponding maps
 \be
  \Lambda_\pm(V)\otimes  \Lambda_\pm(V)\rightarrow \Lambda^N_\pm(V + V^*),
  \ee
  which are defined by restricting the bilinear form $\beta_N$ defined below in eq.~\eqref{beta_N_def} 
  to the subspaces $\Lambda_{\pm}(V)$.

The main result of this paper is to show how the Pl\"ucker and Cartan maps are  related, and 
to express this relation in terms of  Pl\"ucker and  Cartan coordinates.
For an $r$-element subset $I= (I_1, \dots , I_r)\subseteq \{1, \dots , N\}$,  $0 \le r \le N$, arranged in decreasing order
$I_1 > \cdots > I_r$, we denote the complementary subset by  $\tilde{I} = (\tilde{I}_1, \dots, \tilde{I}_{N-r}) \subseteq \{1, \dots N\}$,
 also arranged in decreasing order $\tilde{I}_1 > \cdots > \tilde{I}_{N-r})$.
 
 \begin{definition} Let $\sgn(I)$  be the sign of the permutation $(I, \tilde{I})$ of $(1, \dots, N)$
 \be
 \sgn(I):= \sgn(I_1, \dotsm I_r, \tilde{I}_1, \dots, \tilde{I}_{N-r}).
  \ee
 \end{definition}
Recall the Hodge star automorphism $*: \Lambda(V) \ra \Lambda(V)$, defined relative to the basis $(e_1, \dots, e_{N})$,
as follows:
 \begin{definition} 
   $*:\Lambda(V) \ra \Lambda(V)$ is defined on basis elements by:
 \be
 *e_{I_1} \wedge \cdots \wedge e_{I_r} :=\sgn(I) \ e_{\tilde{I}_{1}} \wedge \cdots \wedge e_{\tilde{I}_{N-r}},
 \ee
  and  extended to $\Lambda(V)$ by linearity. 
  \end{definition}
  This coincides with the usual definition of $*$ with respect to
  the metric in which $(e_1, \dots, e_{N})$ is a positively oriented, orthonormal basis, and the volume form is
  \be
 \Omega_V := e_1 \wedge \cdots \wedge e_N\in \Lambda^N(V).
 \ee
 
We also introduce (following Cartan, \cite{Ca}),  the closely related  automorphism 
of $\Lambda(V)$ defined by forming the product of $N$ orthogonal linear elements
  \bea
  C:\Lambda(V) &\&\ra \Lambda(V) \cr
  C:\vb&\&\mapsto (\psi_1 - \psi^\dag_1)\cdots (\psi_N - \psi^\dag_N)\vb.
  \eea
 When acting on homogeneous elements  $\vb\in \Lambda^r(V)$, this is related to $*$ by
  \be
  C\vb= (-1)^{\tfrac{1}{2}r(r-1) +rN } *\vb.
  \label{C_Hodge_sign}
  \ee
   Up to a sign, its square is the identity element
  \be
  C^2 = (-1)^{\tfrac{1}{2}N(N+1)} \Ib,
  \label{C_squared}
  \ee
and it leaves invariant the scalar product  $(\vb, \wb)$ on $\Lambda(V)$ in which the basis elements
  \be
  e_I = e_{I_1} \wedge \cdots \wedge e_{I_r} , \quad N \ge I_1 > \cdots > I_r \ge 1,
  \ee
  are orthonormal:
  \bea
  (e_I, e_J) &\& :=  \delta_{IJ} \cr
  &\& \cr
   (C(\vb), C(\wb)) &\&= (\vb, \wb).
  \eea
 
 \begin{definition} 
 \label{def_beta_0_N}
Following Cartan  \cite{Ca},  we define the  scalar-valued bilinear form $\beta_0$ on $\Lambda(V)$ as
  \bea
  \beta_0: \Lambda(V) \times \Lambda(V) &\&\ra \Cb \cr
\beta_0(\vb,\wb) &\&=(\vb, C\wb),
\eea
and the  bilinear forms
 \be
 \beta_k: \Lambda(V) \times \Lambda(V) \ra \Lambda^N(V+V^*) , \quad k =1, \dots 2N
 \label{beta_N_def}
 \ee
with values in $\Lambda^k(V+V^*)$ such that, for $\sigma \in  \Lambda^{k}(V+V^*)$,
 \be
 \langle \sigma | \beta_k(\vb, \wb)\rangle := \beta_0(\vb, \Gamma_\sigma \wb) = (\vb, C\, \Gamma_\sigma \wb), \quad \vb, \wb \in \Lambda(V).
 \ee
  \end{definition} 
  Then, as shown by Cartan  \cite{Ca}, the image of the Cartan map \ref{Cartan_map_def} in $\Pb(\Lambda(V))$ is cut out 
  by the intersection of the quadrics 
 \be
  \beta_k(\Ca(w^0), \Ca(w^0)) = 0, \ k \equiv N \ \mod \ 4,  \ 0 \le k\le N-1,
  \label{cartan_bilinear_rels}
\ee
which are the {\em Cartan relations}.
\br
Although all the bilinear forms $\beta_k$ with $k \neq N$ vanish on the image of the
Cartan map, the ones with $k>N$ just duplicate those with $k <N$, and every second one is skew,
so the relations (\ref{cartan_bilinear_rels}) for other values of $k\neq N$
are either trivially satisfied or duplicated for the other cases.
\er
  \begin{definition}
  \label{diag_cartan}
The diagonal value of the product Cartan map is denoted
 \bea
 \Ca_V^{D}:\Gr^0_V(V+V^*) &\& \ra \Pb(\Lambda^N(V)) \times\Pb(\Lambda^N(V))   \cr
 \Ca_V^{D} (w^0) &\&  =  (\Ca_V(w^0), \Ca_V(w^0)).
 \eea
   \end{definition} 
 We then have our main result:
 \begin{theorem}
 \label{main_theorem}
 Up to projectivization, 
 \be
 \beta_N \circ \Ca_V^{D} = \Pl_V |_{\Gr^0_V(V+V^*)}.
 \label{beta_N_Ca_D_plucker}
 \ee
  \end{theorem}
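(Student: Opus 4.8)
The plan is to reduce \eqref{beta_N_Ca_D_plucker} to a statement about the single pure spinor attached to $w^0$ and then extract it from one Clifford‑algebra identity together with the Cartan (Pfaffian Pl\"ucker) relations. Fix $w^0\in\Gr^0_V(V+V^*)$ with basis $\{w_1,\dots,w_N\}$ and let $0\neq s\in\Lambda(V)$ span the line $\Ca_V(w^0)=\Image\big(\prod_{j=1}^N\Gamma_{w_j}\big)$. I would begin from two elementary remarks. First, since $\Gamma_{w_i}^2=0$ and the $\Gamma_{w_j}$ anticommute (both consequences of $Q|_{w^0}=0$, as noted before Definition~\ref{cartan_map}), one has $\Gamma_{w_i}\Gamma_{w_1}\cdots\Gamma_{w_N}=0$, so $s$ is a pure spinor whose null space is $w^0$: $\Gamma_v s=0$ for all $v\in w^0$. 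Second, because the $w_i$ are mutually $Q$-isotropic, the endomorphism $\Gamma_{w_1\wedge\cdots\wedge w_N}$ of $\Lambda(V)$ coincides with the ordered product $\Gamma_{w_1}\cdots\Gamma_{w_N}$; thus $\Pl_V(w^0)$ is the class of the $N$-vector quantized by the rank-one operator whose image defines $\Ca_V(w^0)$, and the theorem asks that $\beta_N$ reconstruct exactly this $N$-vector from $s$. By Definition~\ref{def_beta_0_N} the task is therefore to evaluate $\beta_0(s,\Gamma_\sigma s)$ as $\sigma$ runs over $\Lambda^N(V+V^*)$.

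The core step would be the Clifford-module identity
\be
\Gamma_\sigma\,\Gamma_v=\Gamma_{\sigma\wedge v}+\Gamma_{\iota_v\sigma},\qquad v\in V+V^*,\ \ \sigma\in\Lambda(V+V^*),
\ee
where $\iota_v$ is contraction against $Q(v,\,\cdot\,)$ (normalized to match the Clifford relations). Applying the functional $a\mapsto\beta_0(s,a\,s)$ on Clifford elements $a$ and recalling $\langle\tau\,|\,\beta_k(s,s)\rangle=\beta_0(s,\Gamma_\tau s)$, one obtains, for $\sigma\in\Lambda^{N-1}(V+V^*)$ and $v\in V+V^*$,
\be
\langle\sigma\wedge v\,|\,\beta_N(s,s)\rangle+\langle\iota_v\sigma\,|\,\beta_{N-2}(s,s)\rangle=\beta_0\big(s,\Gamma_\sigma\Gamma_v\,s\big).
\ee
For $v\in w^0$ the right-hand side vanishes since $\Gamma_v s=0$, and the second term on the left vanishes because $\beta_{N-2}(s,s)=0$ — this is the content of the Cartan relations (by the Remark following \eqref{cartan_bilinear_rels}, every $\beta_k$ with $k\neq N$ vanishes on the image of the Cartan map). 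Hence $\langle\sigma\wedge v\,|\,\beta_N(s,s)\rangle=0$ for all $v\in w^0$ and $\sigma\in\Lambda^{N-1}(V+V^*)$. Since $\dim w^0=N$, the subspace $w^0\wedge\Lambda^{N-1}(V+V^*)$ is a hyperplane in $\Lambda^N(V+V^*)$, so its annihilator for $\langle\,\cdot\,|\,\cdot\,\rangle$ is a line, on which the displayed vanishing places $\beta_N(s,s)$. I would then identify that line with $\Cb\,(w_1\wedge\cdots\wedge w_N)$: here the isotropy of $w^0$ enters decisively, through the fact that $w^0$ is its own $Q$-orthogonal complement, which forces the pairings $\langle v\wedge\xi\,|\,w_1\wedge\cdots\wedge w_N\rangle$ ($v\in w^0$) to vanish as well. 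This gives $\beta_N(s,s)=c\,(w_1\wedge\cdots\wedge w_N)$ for a scalar $c$.

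It remains to see $c\neq0$. Normalizing $s$ so that its $\Lambda^N(V)$-component equals $\Omega_V$ on the ``big cell'' around $V$ (and similarly around the reference point of the other component), the component of $\beta_N(s,s)$ along $f_1\wedge\cdots\wedge f_N$ is $\beta_0\big(s,\Gamma_{f_1}\cdots\Gamma_{f_N}\,s\big)=\pm\,\beta_0(s,1)=\pm1$, since $\Gamma_{f_1}\cdots\Gamma_{f_N}$ extracts the coefficient of $\Omega_V$ in $s$; hence $\beta_N\circ\Ca_V^{D}$ is nonzero on the big cells, and — its zero locus being $\SO(V+V^*,Q)$-invariant by the equivariance of $\Ca_V$ and of the forms $\beta_k$ — it vanishes nowhere, which yields \eqref{beta_N_Ca_D_plucker}. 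A leaner variant sidesteps the non-vanishing computation: both $\Pl_V|_{\Gr^0_V(V+V^*)}$ and $\beta_N\circ\Ca_V^{D}$ are equivariant morphisms into the projectivized irreducible half-modules $\Pb(\Lambda^N_\pm(V+V^*))$, hence each is pinned down by the image of one point of each component $\Gr^0_{V_\pm}(V+V^*)$, reducing the statement to the explicit check at $w^0=V_\pm$ afforded by \eqref{cartan_V}. I expect the real obstacle to be the middle paragraph: keeping the signs in the Clifford identity straight and, above all, correctly identifying the annihilator line of $w^0\wedge\Lambda^{N-1}(V+V^*)$ with $\Cb\,(w_1\wedge\cdots\wedge w_N)$ via the maximal isotropy of $w^0$; in affine coordinates this last point is exactly the Pfaffian Cauchy-Binet identity of Theorem~\ref{pfaffian_CB_identity}.
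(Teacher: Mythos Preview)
Your ``leaner variant'' is exactly the paper's proof: reduce by equivariance under $\Spin(V+V^*,Q)$ to a single point in each component $\Gr^0_{V_\pm}(V+V^*)$, then check \eqref{beta_N_Ca_D_plucker} explicitly at $w^0=V$ and $\tilde w^0=\Span\{e_1,\dots,e_{N-1},f_N\}$. The paper carries this out in about ten lines by evaluating both sides on the basis elements $\sigma=f_I\wedge *e_J$.

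Your primary argument is genuinely different and also works, but it trades the paper's elementary two-point check for a substantial external input: you need $\beta_{N-2}(s,s)=0$ on pure spinors, which is (part of) Cartan's characterisation of the image of $\Ca_V$ (see the Remark after \eqref{cartan_bilinear_rels}). Once that is granted, your Clifford identity plus $\Gamma_v s=0$ do force $\beta_N(s,s)$ into the annihilator of $w^0\wedge\Lambda^{N-1}(V+V^*)$. The step you flag as the ``real obstacle''—identifying that annihilator with $\Cb\,(w_1\wedge\cdots\wedge w_N)$—is actually lighter than you fear: the pairing implicit in Definition~\ref{def_beta_0_N} is the $Q$-induced pairing on $\Lambda^N(V+V^*)$, and under it the annihilator of $w^0\wedge\Lambda^{N-1}$ is $\Lambda^N((w^0)^{\perp_Q})=\Lambda^N(w^0)$, precisely because $w^0$ is maximally isotropic. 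So this step is a one-line consequence of $(w^0)^{\perp_Q}=w^0$, not the Pfaffian Cauchy--Binet identity; invoking Theorem~\ref{pfaffian_CB_identity} here would be circular, since the paper shows it is equivalent to Theorem~\ref{main_theorem}. (Be careful, though: if one mistakenly uses the Euclidean pairing in which $\{|\lambda\rangle\}$ is orthonormal, the annihilator is $\Lambda^N((w^0)^{\perp_E})$, which for $w^0=V$ is $\Cb\,f_1\wedge\cdots\wedge f_N$, not $\Cb\,\Omega_V$.)

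In short: the paper's route is shorter and self-contained (no Cartan relations needed), while your primary route is more conceptual and explains \emph{why} $\beta_N(s,s)$ must lie on the Pl\"ucker line, at the cost of importing Cartan's pure-spinor vanishing $\beta_k(s,s)=0$ for $k\neq N$.
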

 This may be deduced as a  consequence of results given by Cartan (\cite{Ca}, Sections 105-109)
  in somewhat implicit form.  A self-contained proof is given in  Section \ref{proof_main_theorem}.
 Theorem \ref{pfaffian_CB_identity}, Section \ref{pfaffian_CB}, gives an equivalent relation,
in coordinate form,  consisting of the bilinear  identity (\ref{pfaff_CB_ident}), which expresses the  determinants of all minors 
of any skew matrix in terms of the Pfaffians of its principal minors. The proof of this equivalence is given there,
as well as a direct proof of the identity (\ref{pfaff_CB_ident}).

\section{Proof of Theorem \ref{main_theorem} }
\label{proof_main_theorem}

\begin{proof}
Since  $C$ preserves the scalar product $(\ , \ )$ on $\Lambda(V)$, eq.~(\ref{beta_N_Ca_D_plucker}) is equivalent to
\be
(C(\Ca_V(w^0)_, C\, \Gamma_\sigma C(\Ca_V(w^0)))) = \langle \sigma | \Pl_V(w^0)\rangle, \quad \forall \sigma\in\Lambda^N(V+V^*).
\label{beta_sigma_Ca_Ca_dual}
\ee
However, as noted in \cite{Ca}, the transpose of $\Gamma_\sigma$ with respect to $( \ , \ )$ is
\be
\Gamma^T_\sigma = C \,\Gamma_\sigma \,C 
\ee
so this is equivalent to
\be
(\Gamma_\sigma C( \Ca_V(w^0)),  \Ca_V(w^0))) =  \langle \sigma | \Pl_V(w^0)\rangle .
\label{beta_sigma_Ca_Ca_dual}
\ee
It is sufficient to prove this for all basis elements of the form
\be
\sigma =f_I \wedge *e_{J},
\label{sigma_f_I_star_e_J}
\ee
for  pairs  $(I, J)$ of decreasingly ordered subsets $I, J \subseteq \{1, \dots, N\}$ of equal cardinality 
\be
r =|I|=|J|, \quad r=0, \dots, N.
\ee
For these,
(\ref{beta_sigma_Ca_Ca_dual}) becomes
\be
(\AA(\Gamma_{f_I}\Gamma_{*e_{J}}) C(\Ca_V(w^0)), \Ca_V(w^0))) = 
\langle f_I \wedge *e_{J} | \Pl_V(w^0)\rangle ,
\label{gamma_f_I_gamma_e_star_J_Ca_diag}
\ee
 where, for $\{w_i \in V+V^*\}_{i=1, \dots m}$
 \be
\AA(\Gamma_{w_1} \cdots \Gamma_{w_m}) := {1\over m!} \sum_{\tau \in \SS_m}\sgn(\tau) \Gamma_{w_{\tau(1)}} \cdots  \Gamma_{w_{\tau(m)}}
\label{antisym_def}
\ee
is the antisymmetrization map. Since $\Gamma_{f_I}$ and $\Gamma_{*e_{J}}$ are already antisymmetric, we have
\be
\AA(\Gamma_{f_I}\Gamma_{*e_{J}}) = \Gamma_{f_I \wedge *e_{J}}= \Gamma_\sigma.
\label{Gamma_FI_GJ_antisym}
\ee 
Therefore (\ref{gamma_f_I_gamma_e_star_J_Ca_diag}) is equivalent to
\be
(\Gamma_{f_I\wedge *e_{J}} C(\Ca_V(w^0)), \Ca_V(w^0))) = 
\langle f_I \wedge *e_{J} | \Pl_V(w^0)\rangle .
\label{gamma_f_I_wedge_e_star_J_Ca_diag}
\ee

By the equivariance of the Cartan and Pl\"ucker maps under
  $\Spin(V+V^*, Q)$ and $SO(V+V^*, Q)$, it is sufficient to prove this for just
one element $w^0$ in each of the two connected components of $\Gr^0_V(V+V^*)$. 
These may be chosen as  
\be
w^0= V=\span\{e_1, \dots, e_N\},
\ee
which lies in the component $\Gr^{0 }_{V_{(-1)^N}}(V+V^*)$, and 
 \be
 \tilde{w}^0= \span\{e_1, \dots, e_{N-1}, f_{N}\},
 \ee
which lies in $\Gr^{0 }_{V_{(-1)^{N+1}}}(V+V^*)$. 

For the first, we have
\be
\Ca_V(w^0)=\Ca_V(V) = [\Omega_V], \quad C(\Ca_V(w^0))= [1], \quad \Pl_V(w^0)=  [\Omega_V],
\ee
and
\be
( \Gamma_{f_I \wedge *e_{J}}1,  \Omega_V) 
 = \delta_{I, \emptyset} \delta_{J, \emptyset} = \langle f_I\wedge *e_{J} | \Omega_V \rangle.
\ee
For the second,
\be
\Ca_V(\tilde{w}^0 )  = [e_1\wedge \cdots \wedge e_{N-1}],  \quad  C(\Ca(\tilde{w}^0)) = [e_N], \quad
 \Pl_V(\tilde{w}^0)  = [e_1\wedge \cdots \wedge e_{N-1} \wedge f_{N}],
\ee
and
\be
(  \Gamma_{f_I \wedge *e_{J}} e_N, e_1 \wedge \cdots \wedge e_{N-1}) 
 = \delta_{I ,(N)} \delta_{J, (N)}=
\langle f_I\wedge *e_{J}|e_1\wedge \cdots \wedge e_{N-1} \wedge f_{N}\rangle  .
\ee
\end{proof}

\section{Factorization of  the Pl\"ucker map\\
 \hbox{$\Pl_{k,N}: \Gr_k(V) \ra \Pb(\Lambda^k V)$}}
\label{plucker_k_N_factoriz}

The  dual Pfaffian line bundles  $\Pf^*_\pm \ra \Gr_V^0(V+V^*)$ may also be related to the dual determinantal
line bundles $\Det^* \ra \Gr_k(V)$ over the Grassmannian of $k$-planes in $V$, for $k=1, \dots N-1$,
by composition of the tautological embedding of  $\Gr_k(V)$ in the isotropic Grassmannian $\Gr^0_V(V+V^*)$
\bea
\iota_{V,k}: \Gr_k(V) &\&\ra \Gr^{0\pm}_V(V+V^*) \cr
\iota_{V, k}: v &\& \mapsto v+ v^\perp \in \Gr^{0\pm}_V(V+V^*),
\label{taut_embed}
\eea
where $v^\perp \ss V^*$ is the $N-k$ dimensional annihilator of the $k$-dimensional subspace 
$v = \Span\{v_1, \dots, v_k\} \in \Gr_k(V)$,
with the Pl\"ucker map 
\bea
\Pl_{k, N}: \Gr_k(V)&\& \ra \Pb(\Lambda^k V) \cr
\Pl_{k, N}: v &\& \mapsto [ v_1 \wedge \cdots \wedge v_k ]
\eea 
embedding the Grassmannians $\Gr_k(V)$ of $k$-planes in $V$ into the projectivization of the $k$th exterior power
$\Lambda^k(V)$.

We  then have the following sequence of embeddings and pull-backs
\be
\begin{matrix}
\Det^*_k & & \Pf^*_{(-1)^k } & & \OO(1)(\Pb(\Lambda_{(-1)^k}(V)))  \cr
\downarrow & & \downarrow &  &\downarrow \cr
\Gr_k(V) & {\iota_{V,k} \atop \lra} & \Gr^0_{V_{(-1)^k}}(V+V^*) & {\Ca_V \atop \lra} & \Pb(\Lambda (V_{(-1)^k})),
\end{matrix}
\label{plucker_cartan_factoriz}
\ee
giving the dual determinantal line bundle  $\Det^*_k \ra \Gr_k(V)$ as the pull-back under $\iota_{V,k}$
of the dual Pfaffian line bundle $\Pf^*_\pm \vert_{\iota_{V,k}(\Gr_k(V))}$ over \hbox{$\Gr^{0\pm}_V(V+V^*)$}  restricted to the image 
$\iota_{V,k}(\Gr_k(V))$  of $\Gr_k(V)$ under $\iota_{V,k}$.
\begin{proposition}
\label{plucker_iota_cartan}
\be
\Det^*_k = \iota_{V,k}^*(\Pf^*_{(-1)^k} )\vert_{\iota_{V,k}(\Gr_k(V))},
\ee
and
the Pl\"ucker map 
\be
\Pl_{k, N}: \Gr_k(V) \ra \Pb(\Lambda^k(V))
\ee
factorizes through the Cartan map
\be
\Pl_{k, N} =  \Ca_V \cdot \iota_{V,k}.
\ee
\end{proposition}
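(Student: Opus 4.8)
The plan is to compute the Cartan image $\Ca_V(\iota_{V,k}(v))$ of an arbitrary point $v\in\Gr_k(V)$ directly from the definition \eqref{Cartan_map_def}, to recognize it as the Pl\"ucker point $[\,v_1\wedge\cdots\wedge v_k\,]$, and then to obtain the line-bundle identity as a formal consequence of the resulting factorization by pulling back $\OO(1)$.

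First I would check that the tautological map \eqref{taut_embed} actually lands in the isotropic Grassmannian. For $v=\Span\{v_1,\dots,v_k\}$ with annihilator $v^\perp\ss V^*$ of dimension $N-k$, the subspace $v+v^\perp$ has dimension $N$, and for $X,Y\in v$ and $\mu,\nu\in v^\perp$ one has $Q((X,0),(Y,0))=0$, $Q((X,0),(0,\mu))=\mu(X)=0$, and $Q((0,\mu),(0,\nu))=0$; hence $v+v^\perp$ is maximal isotropic, so $\Ca_V$ applies to it, the rank-one property being the general fact quoted just before Definition \ref{cartan_map}.

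The main computation is then the following. Extend $v_1,\dots,v_k$ to a basis $v_1,\dots,v_N$ of $V$, let $v_1^\vee,\dots,v_N^\vee\in V^*$ be the dual basis, so that $v^\perp=\Span\{v_{k+1}^\vee,\dots,v_N^\vee\}$, and take as a basis of $w^0:=v+v^\perp$ the vectors $w_j=v_j$ for $1\le j\le k$ and $w_{k+j}=v_{k+j}^\vee$ for $1\le j\le N-k$. By \eqref{Gamma_w}, $\Gamma_{v_j}=v_j\wedge(\cdot)$ and $\Gamma_{v_{k+j}^\vee}=i_{v_{k+j}^\vee}$, so $\prod_{j=1}^N\Gamma_{w_j}$ is the composition of the contractions $i_{v_{k+1}^\vee},\dots,i_{v_N^\vee}$ followed by left exterior multiplication by $v_k$, then $v_{k-1}$, \dots, then $v_1$. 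Applying this operator to $v_{k+1}\wedge\cdots\wedge v_N\in\Lambda^{N-k}(V)$, the contractions successively strip off $v_N,v_{N-1},\dots,v_{k+1}$, each step contributing an explicit sign with total $(-1)^{\binom{N-k}{2}}$ and ending at the scalar $(-1)^{\binom{N-k}{2}}\cdot 1$, and the subsequent exterior multiplications rebuild $v_1\wedge\cdots\wedge v_k$. Hence $\Image\!\bigl(\prod_{j=1}^N\Gamma_{w_j}\bigr)=\Cbb\cdot(v_1\wedge\cdots\wedge v_k)$, so
\be
\Ca_V(\iota_{V,k}(v))=[\,v_1\wedge\cdots\wedge v_k\,]=\Pl_{k,N}(v),
\ee
which is exactly the asserted factorization $\Pl_{k,N}=\Ca_V\cdot\iota_{V,k}$. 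Since $v_1\wedge\cdots\wedge v_k\in\Lambda^k(V)\ss\Lambda_{(-1)^k}(V)$ and $\Ca_V$ maps $\Gr^0_{V_\pm}(V+V^*)$ into $\Pb(\Lambda_\pm(V))$, this also shows $\iota_{V,k}(\Gr_k(V))\ss\Gr^0_{V_{(-1)^k}}(V+V^*)$, consistently with the diagram \eqref{plucker_cartan_factoriz}.

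Finally, the bundle identity is formal: pulling $\OO(1)$ back along the composition and using $\Pf^*_{(-1)^k}=\Ca_V^*\bigl(\OO(1)(\Pb(\Lambda_{(-1)^k}(V)))|_{\Ca_V(\Gr^0_{V_{(-1)^k}}(V+V^*))}\bigr)$ together with the defining identification $\Det^*_k=\Pl_{k,N}^*\bigl(\OO(1)(\Pb(\Lambda^k(V)))|_{\Pl_{k,N}(\Gr_k(V))}\bigr)$ of the dual determinantal bundle over $\Gr_k(V)$, one gets
\be
\Det^*_k=\Pl_{k,N}^*\,\OO(1)=\iota_{V,k}^*\,\Ca_V^*\,\OO(1)=\iota_{V,k}^*(\Pf^*_{(-1)^k})\big|_{\iota_{V,k}(\Gr_k(V))}.
\ee
I expect the only delicate point to be the sign bookkeeping in the contraction step, together with confirming that the resulting line in $\Lambda^k(V)$ is independent of the extension of $v_1,\dots,v_k$ to a basis of $V$; the latter independence is however already built into the well-definedness of $\Ca_V$, so in practice the whole argument reduces to the short direct calculation above.
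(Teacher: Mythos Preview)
Your proposal is correct and follows the same approach as the paper, which simply says the result ``follows immediately from the definitions'' of $\Pl_{k,N}$, $\Gamma_w$, $\Ca_V$, and $\iota_{V,k}$. You have spelled out those definitions explicitly---verifying isotropy, exhibiting a vector on which $\prod_j\Gamma_{w_j}$ is nonzero, and pulling back $\OO(1)$---which is exactly what the paper's one-line proof is implicitly invoking.
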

\begin{proof}
This follows immediately from the definitions
\bea
\Pl_{k,N} : \Gr_k(V) &\&\ra \Pb(\Lambda^k V) \cr
\Pl_{k,N} : \Span\{v_1, \dots, v_k\}&\& \mt [v_1 \wedge \cdots\wedge v_k]
\eea
of the Pl\"ucker map, (\ref{Gamma_w}), (\ref{Cartan_map_def}) of the Cartan map $\Ca_V$ 
and (\ref{taut_embed}) of the tautological embedding $\iota_{V,k}$.
\end{proof}

 
\section{Pl\"ucker and Cartan coordinates}
\label{plucker_cartan}

 The Pl\"ucker coordinates $\{\pi_\lambda(w)\}$ of an element $w \in \Gr_V(V+ V^*)$
are, up to projectivization, the coefficients in the expansion  of the image of the Pl\"ucker map 
in the basis $\{|\lambda\rangle\}$:
\be
\Pl_V(w) = \big[\sum_{\lambda \subseteq (N)^N} \pi_\lambda(w) |\lambda\rangle\big].
\ee
 Equivalently, if we  define  $W$  to be the $2N \times N$ dimensional rectangular matrix whose $j$th column is the $j$th basis vector $w_j$ for $w\in \Gr_V(V+V^*)$, expressed as a column vector relative to the basis  $(e_1, \dots, e_N, \dots , e_{2N})$, 
  and  $W_{\lambda}$ to be the $N \times N$  submatrix whose $i$th row is the $l_i$th row of $W$, 
  we have, up to projectivization,
 \be
\pi_\lambda(w) = \det(W_\lambda). 
 \ee
 Thus, each Pl\"ucker coordinate $\pi_\lambda(w)$ is in fact a holomorphic section
 of the dual determinantal line bundle $\Det^* \ra \Gr_V(V+V^*)$, and the full set of Pl\"ucker coordinates provides a basis for the 
 space $H^0(\Gr_V(V+V^*), \Det^*)$ of holomorphic sections (see Ref.~(\cite{GH}).)
 
 Recall also that, by the (generalized) Giambelli identity (cf.~\cite{HB}, Appendix C.8), if we express $\lambda$ in Frobenius notation \cite{Mac} as
 \be
 \lambda(\alpha, \beta)=(\alpha_1, \dots, \alpha_r | \beta_1, \dots, \beta_r), 
\ee
with
\be
 N> \alpha_1 > \cdots >\alpha_r \ge 0, \quad  N> \beta_1 > \cdots >\beta_r \ge 0,
 \ee
we have
 \be
\left(\pi_\emptyset(w)\right)^{r-1} \pi_\lambda(w) = \det\left(\pi_{(\alpha_i | \beta_j)}(w)\right)_{1\le i,j \le r}.
\label{general_giambelli_id}
 \ee
 \begin{definition}
 For $\alpha = (\alpha_1, \dots , \alpha_r)$, let $\alpha'=  (\alpha'_1, \dots , \alpha'_r)$ denote the complement 
 in the $r \times N$ rectangle  (with columns labelled from $0$ to $N-1$), in reversed order:
 \be
 \alpha'_i := N-1 - \alpha_{r-i+1}, \quad 1\le i \le r.
 \ee
 \end{definition}
We denote by 
\be
\lambda(\alpha):= (\alpha_1, \dots, \alpha_r | \alpha'_1, \dots, \alpha'_r)
\ee
 the {\em pseudosymmetric}  partitions, whose Frobenius indices satisfy $(\beta=\alpha')$. 
 \begin{remark}
 If we reverse the ordering in the second half of our basis, choosing instead: $\{e_1,\dots, e_N, e_{2N}, \dots, e_{N+1}\}$,
these partitions would  correspond to $\beta = \alpha$, and hence would, in fact, be symmetric
in the usual sense.
 \end{remark}
 
 In the following, we adopt the notation $I(\alpha)$ to denote the ordered subset 
 \be
 \{I_1(\alpha), \dots, I_r(\alpha)\} \subseteq\{1, \dots, N\}
 \ee
obtained from $\alpha =(\alpha_1, \dots, \alpha_r)$, viewed as a {\em strict partition} \cite{Mac},
 by  adding $1$ to each of its parts
 \bea
 I_i(\alpha)&\&:= \alpha_i + 1, \cr
 1\le I_i(\alpha) &\& \le N, \quad i =1, \dots, r.
 \label{I_alpha_def}
 \eea
 
 Define the bases
  $\{e_{I}\}$ and $\{f_{I}\}$ for $\Lambda^r V$ and $\Lambda^r V^*$, respectively, as:
 \bea
 e_{I} &\&:= e_{I_1} \wedge \cdots \wedge e_{ I_r}\in \Lambda^r V ,
 \label{def_e_alpha}
 \\
 f_{I}&\&:= f_{I_1}\wedge \cdots \wedge f_{I_r}\in \Lambda^r V^*,
\label{def_f_alpha}
 \eea 
 where
 \be
 I = (I_1 > \cdots I_r > 0) \subseteq \{1, \dots, N\}, \quad 1\le r \le N.
 \label{I_def}
 \ee
 In particular, the volume form $\Omega_V$ on $V$ is 
 \be
  \Omega_V:=(-1)^{{N\over 2}(N-1)}e_{(N,N-1, \dots, 1)} .
 \ee 
 \begin{definition}
  Let $\tilde{\alpha}$ denote the complement of $\alpha$ in $(0, \dots, N-1)$, 
  with parts $\{\tilde{\alpha}_i\}_{i=1, \dots , N-r}$ 
 in decreasing order forming a strict partition of length $N-r$, and let
 \be
 \tilde{I}(\alpha) = I(\tilde{\alpha})
 \ee
 denote the corresponding complement of $I(\alpha)$ in $\{1, \dots ,N\}$.
 \end{definition}

 On $\Lambda^N(V+V^*)$,  we then have
 \be
 | \lambda(\alpha,  \beta) \rangle = f_{I(\alpha)} \wedge *e_{I(\beta')}.
 \label{lambda_alpha_beta_basis_el}
 \ee
 In particular, for {\em pseudosymmetric} partitions $\lambda(\alpha)$, we have
 \be
 |\lambda(\alpha)\rangle =  f_{I(\alpha)} \wedge *e_{I(\alpha)}, 
  \label{lambda_alpha_basis_el}
 \ee
which is the image, under the Pl\"ucker map, of the isotropic subspace spanned
 by the basis vectors $\{e_{\tilde{I}_1(\alpha)}, \dots , e_{\tilde{I}_{N-r}(\alpha)}, f_{I_1(\alpha)}, \dots, f_{I_r(\alpha)}\}$.
 \begin{definition}
 The sign of the strict partition $\alpha$, denoted $\sgn(\alpha)$ is
defined to be the same as that of $I(\alpha)$:
\be
\sgn(\alpha) := \sgn(I(\alpha))= (-1)^{r(N-r)} \sgn(\tilde{\alpha}).
\ee
 \end{definition}
 \begin{definition}  The {\em Cartan coordinates} $\kappa_\alpha(w^0)$ of  
 $w^0\in\Gr^0_V(V+ V^*)$
 are  defined as:
 \be
 \kappa_\alpha(w^0)  := (\Ca_V(w^0), *e_{{I}(\alpha)}) 
  =(-1)^{\tfrac{1}{2}r(r-1) + rN}\beta_0\left(\Ca_V(w^0), e_{I(\alpha)}\right).
 \label{kappa_alpha_def_invar}
 \ee
We may therefore express its Cartan image as
 \be
 \Ca_V(w^0) = \big[\sum_{\alpha \, \in \Big\{{ \text{strict partitions of parity } \pm(-1)^N, \atop \text{ where } w^0\in \Gr^0_{V_\pm}(V+V^*)} \Big\}}
\kappa_\alpha (w^0) *e_{I(\alpha)}\big].
 \label{cartan_map_exp}
 \ee
 \end{definition}

On the intersection
\be
U^0_{\lambda(\alpha)} := U_{\lambda(\alpha)} \cap \Gr^0_V(V+V^*)
\ee
of the coordinate  neighbourhood  $U^0_{\lambda(\alpha)} \ss  \Gr_V(V+ V^*)$ on which 
 \be
 \det(W_{\lambda(\alpha)}) \neq 0,
 \ee
 with $\Gr^0_V(V+V^*)$, let  $A^{\lambda(\alpha)}(w^0)$ denote the $N\times N$ submatrix of 
 $W W_{\lambda(\alpha)}^{-1}$   whose rows are in the complementary positions to those of $W_{\lambda(\alpha)}$. 
The condition that $w^0$ is isotropic  is equivalent to the fact that this is a skew symmetric matrix
 \be
 \left(A^{\lambda(\alpha)}(w^0)\right)^T= - A^{\lambda(\alpha)}(w^0).
 \ee 
In particular, for the null partition $\alpha=\emptyset$, $A^\emptyset(w^0)$ is the affine coordinate matrix 
of the element $w^0\in \Gr^0_V(V + V^*)$ on  the ``big cell'' $U^0_{\emptyset}$.

By the generalized Giambelli identity (\ref{general_giambelli_id}),  the Pl\"ucker coordinates of $w^0\in U^0_{\emptyset}$ 
may be expressed as determinants of minors of the affine coordinate matrix $A^\emptyset(w^0)$.
\begin{lemma}\label{affinePlucker} 
For strict partitions $\alpha, \beta$ of length $r$, let $A_{(I(\alpha) | I(\beta))}$  
denote the submatrix of the matrix $A^\emptyset(w^0)$  with rows $I(\alpha)$ and columns  $I(\beta)$. Then,
up to projective equivalence,  we have the following expression for the Pl\"ucker coordinates of $w^0$, 
\be 
\pi_{\lambda(\alpha | \beta')}(w^0) =\det (A_{(I(\alpha) | I(\beta))}) = (f_{I(\alpha)} \wedge * e_{I(\beta)},  \Pl_V(w^0)) .
\ee
\end{lemma}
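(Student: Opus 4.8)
The plan is to establish the identity $\pi_{\lambda(\alpha\mid\beta')}(w^0) = \det(A_{(I(\alpha)\mid I(\beta))})$ by a direct matrix computation on the big cell, and then to recognize the determinant as the pairing $(f_{I(\alpha)} \wedge * e_{I(\beta)}, \Pl_V(w^0))$ via equation \eqref{lambda_alpha_beta_basis_el}. First I would fix a homogeneous coordinate matrix $W$ for $w^0 \in U^0_\emptyset$: since we are on the big cell $U^0_\emptyset$ where $\det(W_\emptyset) \neq 0$, I may normalize $W W_\emptyset^{-1}$ so that the $N$ rows indexed by the particle positions of the empty partition form the identity block $\Ib_N$, and the remaining $N$ rows form the skew-symmetric affine coordinate matrix $A^\emptyset(w^0)$. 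Concretely, with the empty partition corresponding to particle positions giving rows $e_{N+1}, \dots, e_{2N}$ (i.e.\ the $V^*$-block), the normalized $W$ has the $V^*$-rows equal to $\Ib_N$ and the $V$-rows equal to $A := A^\emptyset(w^0)$. The skew symmetry of $A$ is exactly the statement that $w^0$ is isotropic, as recalled just before the lemma.

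Next I would compute $\pi_{\lambda(\alpha\mid\beta')}(w^0) = \det(W_{\lambda(\alpha\mid\beta')})$ directly. The partition $\lambda(\alpha\mid\beta')$ in Frobenius notation $(\alpha_1,\dots,\alpha_r\mid\beta'_1,\dots,\beta'_r)$ translates, via the particle-position dictionary $l_j = \lambda_j - j + N + 1$, into a specific $N$-element subset of $\{1,\dots,2N\}$; the arm indices $\alpha$ contribute rows in the $V$-block (positions $1,\dots,N$) and the leg indices, after the $\beta \leftrightarrow \beta'$ reversal built into the pseudosymmetric/Frobenius conventions, contribute rows in the $V^*$-block (positions $N+1,\dots,2N$). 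The point of the $\beta' = $ complement-reversal convention, and of \eqref{lambda_alpha_beta_basis_el} asserting $|\lambda(\alpha,\beta)\rangle = f_{I(\alpha)} \wedge * e_{I(\beta')}$, is precisely that the rows selected from the $V^*$-block are the complement $\tilde I(\beta)$ in suitable order. Since the $V^*$-rows of normalized $W$ form $\Ib_N$, selecting the rows $\tilde I(\beta)$ from that block and the rows $I(\alpha)$ from the $A$-block, $\det(W_{\lambda(\alpha\mid\beta')})$ expands by Laplace along the identity rows: the $\Ib_N$-rows in positions $\tilde I(\beta)$ pin down columns $\tilde I(\beta)$, leaving the complementary minor of $A$ on rows $I(\alpha)$ and columns $I(\beta)$, up to the sign $\sgn$ that reorders the selected rows/columns into block form. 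One then checks that this sign is absorbed into the projective-equivalence ambiguity (or, if one wants it on the nose, matches the $\sgn(\alpha)$, $\sgn(\beta)$ conventions fixed in the definitions above), yielding $\pi_{\lambda(\alpha\mid\beta')}(w^0) = \det(A_{(I(\alpha)\mid I(\beta))})$ up to projective scaling.

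Finally, for the second equality I would invoke \eqref{lambda_alpha_beta_basis_el}: the basis element $|\lambda(\alpha\mid\beta')\rangle$ of $\Lambda^N(V+V^*)$ is $f_{I(\alpha)} \wedge * e_{I(\beta)}$ (reading $(\beta')' = \beta$), so $(f_{I(\alpha)} \wedge * e_{I(\beta)}, \Pl_V(w^0)) = \langle \lambda(\alpha\mid\beta')\mid \Pl_V(w^0)\rangle = \pi_{\lambda(\alpha\mid\beta')}(w^0)$ by the definition of the Pl\"ucker coordinates and the orthonormality $\langle\lambda\mid\mu\rangle = \delta_{\lambda\mu}$. The main obstacle I anticipate is purely bookkeeping: tracking the various sign conventions — the reversal in $\beta'$, the definition of $\sgn(\alpha)$ versus $\sgn(I(\alpha))$, the $(-1)^{\frac12 r(r-1)}$-type factors hidden in $*$ and in $\Omega_V$, and the Laplace-expansion sign — and verifying they are mutually consistent (or else harmless modulo projectivization). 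I would organize this by first proving the statement cleanly "up to sign and scalar," which is the geometrically meaningful content, and then, if desired, doing a single careful sign audit against the normalization $\Ca_V(V) = [\Omega_V]$ and $\pi_\emptyset(w^0) = 1$ on the big cell to fix the constant; alternatively, the generalized Giambelli identity \eqref{general_giambelli_id}, combined with the $r=1$ case $\pi_{(\alpha_i\mid\beta_j)}(w^0) = A_{I_i(\alpha), I_j(\beta)}$ (which is immediate from the normalized form of $W$), gives an inductive route that sidesteps the global Laplace sign entirely.
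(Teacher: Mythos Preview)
The paper does not supply a detailed proof of this lemma; it introduces it with the sentence ``By the generalized Giambelli identity \eqref{general_giambelli_id}, the Pl\"ucker coordinates \dots\ may be expressed as determinants of minors of the affine coordinate matrix $A^\emptyset(w^0)$'' and then states the result without further argument. Your closing alternative --- Giambelli together with the hook case $r=1$ --- is therefore precisely what the paper has in mind. Your primary route, a direct Laplace expansion of $\det(W_{\lambda(\alpha\mid\beta')})$ along the identity-block rows, is more explicit and equally valid; it has the advantage of making the sign bookkeeping visible rather than hiding it inside the Giambelli citation.

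One correction to your conventions: in this paper the empty partition has particle positions $l_j = N+1-j$, so $|\emptyset\rangle = e_N\wedge\cdots\wedge e_1 = \pm\Omega_V$ corresponds to the $V$-block, not the $V^*$-block. Accordingly (cf.\ the display $\begin{pmatrix}\Ib_N\\A^\emptyset(w^0)\end{pmatrix}$ in the proof of Proposition~\ref{kappa_alpha_big_cell}) the normalized $W$ has the $V$-rows equal to $\Ib_N$ and the $V^*$-rows equal to $A$, and by \eqref{lambda_alpha_beta_basis_el} the Frobenius arms $\alpha$ (via $f_{I(\alpha)}$) sit in the $V^*$-block while the legs (via $*e_{I(\beta)}$) sit in the $V$-block. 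You have both of these swapped, but the two swaps cancel: the arm indices still land in the $A$-block and the complementary leg indices in the $\Ib_N$-block, so your Laplace expansion and your second-equality argument via \eqref{lambda_alpha_beta_basis_el} go through unchanged. It is still worth aligning with the paper's conventions before a final write-up.
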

The Cartan coordinates  $\{\kappa_\alpha(w^0)\}$  may similarly be expressed as Pfaffians of the principal minors of  $A^\emptyset(w^0)$.
\begin{definition}
For any skew symmetric $N\times N$ matrix $A$, and any strict partition $\alpha$ of length $r=\ell(\alpha)$, between $1$ and $N$, let $A(\alpha)$ denote the $r \times r$ principal minor of $A$ with  rows and columns  in the 
positions $(I_1(\alpha), \dots, I_r(\alpha))$.
\end{definition}
 As $\alpha$ varies over the $2^N$ strict partitions $\alpha$ for which $\lambda(\alpha) \subseteq (N)^N$,
 applying (\ref{kappa_alpha_def_invar}), the Cartan coordinates $\{\kappa_\alpha(w^0)\}$ 
for $w^0\in U^\emptyset$ are given by the following.
\begin{proposition}
\label{kappa_alpha_big_cell}
On the big cell $U^0_{\emptyset}$, the Cartan coordinates are
 \be
 \kappa_\alpha(w^0) =(-1)^{r\over 2} \Pf(A^\emptyset(w^0) (\alpha)),
 \label{kappa_alpha_pfaffian}
\ee 
up to projective equivalence.
\end{proposition}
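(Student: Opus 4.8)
The plan is to exhibit a concrete nonzero representative of $\Ca_V(w^0)$ on the big cell and read off the Cartan coordinates. Fix $w^0\in U^0_{\emptyset}$ and let $A:=A^\emptyset(w^0)$ be its skew-symmetric affine coordinate matrix, so $w^0=\Span\{w_1,\dots,w_N\}$ with $w_j=e_j+\sum_k A_{jk}f_k$ and hence $\Gamma_{w_j}=\psi_j+\sum_k A_{jk}\psi^\dag_k$. The first step is to recognise each $\Gamma_{w_j}$ as a Clifford-algebra conjugate of $\psi_j$: setting $X:=-\tfrac12\sum_{k,l}A_{kl}\psi^\dag_k\psi^\dag_l$, the anticommutation relations give $[X,\psi_j]=\sum_k A_{jk}\psi^\dag_k$ and $[X,[X,\psi_j]]=0$, whence $\Gamma_{w_j}=e^X\psi_j\,e^{-X}$ (the exponential being a polynomial, since $X$ is nilpotent). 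Telescoping gives $\prod_{j=1}^N\Gamma_{w_j}=e^X\big(\prod_{j=1}^N\psi_j\big)e^{-X}$, and since $X\cdot 1=0$ and $\big(\prod_j\psi_j\big)\cdot 1=\Omega_V$,
\be
\Big(\prod_{j=1}^N\Gamma_{w_j}\Big)\cdot 1 = e^X\,\Omega_V .
\ee
Because $\prod_j\Gamma_{w_j}$ has rank one, its image is the line spanned by this vector, which is nonzero since its $\Omega_V$-component equals $1$; thus $\Ca_V(w^0)=[e^X\Omega_V]$.

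The second step expands $\Psi:=e^X\Omega_V$. Since $\psi^\dag_k=i_{f_k}$, the operators in $X$ anticommute and each $\psi^\dag_k$ squares to zero, so only squarefree monomials survive in $e^X$; these are indexed by partial matchings of $\{1,\dots,N\}$, and collecting the matchings with a common vertex set $S$ turns the internal sum into the combinatorial expression for the Pfaffian of the principal submatrix $A_S$ of $A$ on the index set $S$. This gives the Pfaffian expansion
\be
\Psi = e^X\,\Omega_V = \sum_{S\subseteq\{1,\dots,N\}}\Pf(A_S)\; i_{f_S}\,\Omega_V ,
\ee
with $f_S=f_{s_1}\wedge\cdots\wedge f_{s_{|S|}}$ for $S=\{s_1<\cdots<s_{|S|}\}$ and $\Pf(A_S):=0$ when $|S|$ is odd. (Since $U^0_{\emptyset}\subseteq\Gr^0_{V_{(-1)^N}}(V+V^*)$, every surviving multivector $i_{f_S}\Omega_V$ has degree $N-|S|$ of parity $(-1)^N$, so only even $|S|$ occur; this is why the exponent $r/2$ below is always an integer.)

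For the third step, use $\kappa_\alpha(w^0)=(\Ca_V(w^0),*e_{I(\alpha)})$ together with the orthonormality of $\{e_J\}$. The multivector $*e_{I(\alpha)}$ is a scalar multiple of the basis element indexed by $\tilde I(\alpha)$, while $i_{f_S}\Omega_V$ is, up to sign, the one indexed by the complement of $S$; hence only the term with $S$ equal to the set $I(\alpha)$ contributes, and, writing $r=\ell(\alpha)$ and noting $A_{I(\alpha)}=A^\emptyset(w^0)(\alpha)$,
\be
\kappa_\alpha(w^0) = \Pf\big(A^\emptyset(w^0)(\alpha)\big)\cdot\big(i_{f_{I(\alpha)}}\Omega_V,\ *e_{I(\alpha)}\big) ,
\ee
where $f_{I(\alpha)}$ is the increasing wedge on $I(\alpha)$. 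It then remains to evaluate the residual sign $\big(i_{f_{I(\alpha)}}\Omega_V,\,*e_{I(\alpha)}\big)$, which is the only delicate point. Computing $i_{f_{I(\alpha)}}\Omega_V$ explicitly as a signed basis vector and comparing orderings (the increasing convention used here versus the decreasing convention defining $e_{\tilde I(\alpha)}$, $\sgn(I(\alpha))$ and the Hodge star $*$), one finds that its shuffle sign cancels $\sgn(I(\alpha))$ up to the reversal factors $(-1)^{r(r-1)/2}$ and a squared $(-1)^{(N-r)(N-r-1)/2}$, so the residual sign equals $(-1)^{r(r-1)/2}=(-1)^{r/2}$ (recall $r$ is even). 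The special cases $A=0$ (giving $\kappa_\emptyset=1$ and $\kappa_\alpha=0$ for $\ell(\alpha)\ge 2$) and $N=2$ fix the overall normalisation. Assembling the three steps yields $\kappa_\alpha(w^0)=(-1)^{r/2}\Pf(A^\emptyset(w^0)(\alpha))$; squaring recovers the expression $\pi_{\lambda(\alpha)}(w^0)=\det A^\emptyset(w^0)(\alpha)$ of Lemma~\ref{affinePlucker}, consistently with Theorem~\ref{main_theorem}.
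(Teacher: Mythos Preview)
Your argument is correct and follows essentially the same route as the paper: both identify the Cartan image on the big cell as $[e^{X}\Omega_V]$ with $X$ the quadratic in $\psi^\dag$'s built from $A^\emptyset(w^0)$, then expand the exponential to obtain Pfaffians of principal minors. The only difference is cosmetic: where the paper obtains the exponential formula by invoking equivariance of the Cartan map under $\Spin(V+V^*,Q)$ (writing the homogeneous coordinate matrix as $\exp\big(\begin{smallmatrix}0&0\\A^\emptyset&0\end{smallmatrix}\big)\big(\begin{smallmatrix}I\\0\end{smallmatrix}\big)$), you derive it directly via the conjugation identity $\Gamma_{w_j}=e^X\psi_j e^{-X}$ and evaluation on $1$, and you spell out the Pfaffian expansion and residual sign in more detail than the paper does.
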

\begin{proof}
For $w^0\in U^0_\emptyset$, we may choose the basis 
\be
w_i = e_i + \sum_{j=1}^N A^\emptyset_{ji}f_{j}, \quad i =1 , \dots, N,
\label{big_cell_basis}
\ee
and hence 
\be
\Gamma_{w_i}  = \psi_i + \sum_{j=1}^N A^\emptyset_{ji}\psi^\dag_{j}, \quad i =1 , \dots, N.
\ee
The homogeneous $2N \times N$ coordinate matrix representing $w^0$ in the basis $(e_1, \dots, e_N, f_1, \dots, f_N)$
  is thus
\be
\begin{pmatrix}\Ib_N \cr A^\emptyset(w^0) \end{pmatrix} = \exp\begin{pmatrix} {\bf 0} &  {\bf 0} 
 \cr A^\emptyset(w^0)&  {\bf 0}  \end{pmatrix}  \begin{pmatrix}\Ib_N \cr {\bf 0}\end{pmatrix}
= \begin{pmatrix} \Ib_N & {\bf 0}  \cr A^\emptyset(w^0) & \Ib_N \end{pmatrix} \begin{pmatrix}\Ib_N \cr {\bf 0} \end{pmatrix}.
\ee

 Eq.~(\ref{cartan_V}), together with the equivariance of the Cartan map, imply that 
\bea
\Ca(w^0)  &\&= \left[e^{\sum_{1\le i < j \le N} (A^\emptyset)_{ij}(w^0) \psi^\dag_i \psi^\dag_j} \cdot \Omega_V \right] 
\cr
&\& = \left[\sum_{n=0}^{[N/2]} {1\over n!} \left(\sum_{1\le i < j \le N} (A^{\emptyset})_{ij} \psi^\dag_i  \psi^\dag_j \right)^n \Omega_V \right]\cr
&\& \phantom{X} \cr
&\& =\left[ \sum_{\alpha \in  \{{\text{strict partitions  of even cardinality }  r}\}}   \kappa_\alpha(w^0) *e_{{I}(\alpha)}\right], 
\label{cartan_map_affine_basis}
\eea
where
\be
\kappa_\alpha(w^0) := (-1)^{r\over 2}\Pf\left (A^\emptyset(w^0)(\alpha)\right).
\label{cartan_coords_aff}
\ee
\end{proof}
\section{Coordinate interpretation: a Pfaffian analog of the Cauchy-Binet identity}
\label{pfaffian_CB}

Choose two pairs 
\be
I, J \subseteq \{1, \dots, N\}, \quad K,L\subseteq \{1, \dots, N\}
\ee
of decreasingly ordered subsets
\bea
I &\&= (I_1, \dots, I_i), \quad J= (J_1, \dots , J_j) \cr
K &\&= (K_1, \dots, K_k), \quad L= (L_1, \dots , L_l)
\eea
 with cardinalities
 \be
 |I| =i, \quad |J| =j, \quad |K| = k, \quad |L| = l.
 \ee
satisfying
\be
i+j =k+l = 2r.
\ee
for some $r$, with 
\be
1\leq r \leq N.
\ee
\br
In Theorem \ref{pfaffian_CB_identity} below, only the case where  $I$ and $J$ have  equal cardinalities
\be
i = j= r
\ee
 will be needed. But for the moment, we do not require this restriction.
 \er
  Let $\II_{(I J | K L)}$ be the $2r \times 2r$ submatrix of the $2N \times 2N$ matrix
 with block form
 \be
\II_{2N} := \begin{pmatrix} \Ib_N & \Ib_N \cr \Ib_N &  -\Ib_N \end{pmatrix},
 \ee
 whose first $i$ rows are the rows of $\II_{2N}$ in positions $\{I_i\}_{i=1, \dots , r}$ and next $j$ rows 
 are those in positions $\{J_{i+N}\}_{i=1, \dots ,r}$,
 and whose first $k$ columns are those of $\II_{2N}$ in positions $\{K_i\}_{i=1, \dots , k}$ and 
 next $l$ columns are those in positions $\{L_{i +N}\}_{i=1, \dots, l}$.
 Now define
 \be
 \Delta_{(I J | K L)} := \det(\II_{(I J | K L)}).
 \ee
 \begin{lemma}
 \label{Delta_IJKL_abs}
$\Delta_{(I J | K L)}$ vanishes unless the set theoretic conditions
\be
K\cup L=I \cup J \text{ and }  K\cap L =I \cap J
\label{IJKL_cond}
\ee
are satisfied.
 \end{lemma}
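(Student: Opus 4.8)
The plan is to view $\II_{(IJ|KL)}$ concretely as a $2r\times 2r$ matrix built from rows and columns of the block matrix $\II_{2N} = \left(\begin{smallmatrix}\Ib_N & \Ib_N \cr \Ib_N & -\Ib_N\end{smallmatrix}\right)$ and to analyze when it is singular purely in terms of the index sets. First I would describe its entries explicitly: a row of $\II_{2N}$ in the ``upper'' position $I_a$ (i.e.\ $1\le I_a\le N$) has a $1$ in columns $I_a$ and $I_a+N$ and zeros elsewhere; a row in the ``lower'' position $J_b+N$ has a $1$ in column $J_b$ and a $-1$ in column $J_b+N$. Dually, a column in upper position $K_c$ has a $1$ in rows $K_c$ and $K_c+N$; a column in lower position $L_d+N$ has a $1$ in row $L_d$ and a $-1$ in row $L_d+N$. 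So the chosen submatrix has at most two nonzero entries per row and per column, and the support pattern is controlled entirely by which of the integers $1,\dots,N$ appear in $I$ versus $J$, and in $K$ versus $L$.

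Next I would exhibit an explicit kernel vector (or a zero row) when \eqref{IJKL_cond} fails, splitting into the two cases. Case (a): suppose some $n \in \{1,\dots,N\}$ lies in exactly one of $K,L$ but in neither or both of $I,J$ — equivalently $K\cup L \ne I\cup J$, after symmetrizing the argument — then the column of $\II_{(IJ|KL)}$ indexed by that occurrence of $n$ has at most one nonzero entry, and tracking which rows can be nonzero (only rows coming from $n\in I$ or $n\in J$) one finds either an all-zero column or a column proportional to a standard basis vector whose pivot row is otherwise empty, forcing a repeated/absent support that makes the determinant vanish; more cleanly, one checks that if $n\notin I\cup J$ then the columns indexed by $n$ (whichever of $K,L$ they lie in) are identically zero. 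Case (b): $K\cup L = I\cup J$ but $K\cap L\ne I\cap J$; then there is an $n$ that is a ``doubled'' index on one side (appears in both $K$ and $L$, or in both $I$ and $J$) but ``single'' on the other. Here I would produce an explicit nonzero linear combination of the two columns (resp.\ two rows) indexed by $n$ that is supported only on rows (resp.\ columns) which, by the single-ness on the other side, admit no compensating contribution — yielding a nontrivial kernel element. The signs $\pm1$ coming from the lower blocks are exactly what make these combinations ($\text{column}_{K}(n) \pm \text{column}_{L}(n)$) collapse onto a single upper or lower coordinate, so the bookkeeping is short.

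The main obstacle is organizing the case analysis cleanly: because $I,J$ (and $K,L$) may themselves overlap, a given integer $n$ has a ``multiplicity profile'' in $\{0,1,2\}$ on the row side and on the column side, and one must verify the determinant vanishes whenever these two profiles disagree for some $n$, while the matrix is (generically) nonsingular when they agree everywhere. I expect the slickest route is to permute rows and columns so that $\II_{(IJ|KL)}$ becomes block-diagonal with one small block per value of $n\in I\cup J\cup K\cup L$ (each block of size (row-multiplicity of $n$) $\times$ (column-multiplicity of $n$)), and then observe that $\det\II_{(IJ|KL)}$ is, up to sign, the product of the determinants of these blocks; a block with unequal row/column multiplicity is non-square, forcing the determinant of the whole to be zero, which is precisely the failure of \eqref{IJKL_cond}. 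Writing the permutation sign is routine and I would not belabor it; the conceptual content is the block decomposition by index value.
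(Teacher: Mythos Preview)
Your approach is correct and essentially coincides with the paper's: both arguments observe that an index $n\in K\cup L$ but $n\notin I\cup J$ produces a zero column (and dually a zero row), forcing $K\cup L=I\cup J$, and then that an index appearing with multiplicity two on one side (say $n\in I\cap J$) but only once on the other gives two linearly dependent rows (equal or opposite, depending on whether the single occurrence is in $K$ or $L$), forcing $I\cap J=K\cap L$. Your block-diagonalization by index value is a pleasant repackaging of the same count --- each value $n$ contributes a block of size (row-multiplicity)$\times$(column-multiplicity), and a non-square block kills the determinant --- but this is not a genuinely different route; the paper simply states the row/column observations directly and more tersely. One minor comment: your initial phrasing of Case~(a) (``$n$ in exactly one of $K,L$ but in neither or both of $I,J$'') is not equivalent to $K\cup L\neq I\cup J$ and does not by itself produce a zero column when $n\in I\cap J$; you correctly retreat to the clean statement ``$n\notin I\cup J$ gives a zero column'', which is what the paper uses.
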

 
  \begin{proof}
  Elements of $K\cup L$ that are not in $I\cup J$ give vanishing columns of $\II_{(I J | K L)}$,
  and elements of $I\cup J$ that are not in $K\cup L$ give vanishing rows. In either case, the
  determinant $\Delta_{(I J | K L)}$ vanishes. Therefore, if it does not vanish, the first
  equality of (\ref{IJKL_cond}) must hold.  The rows and columns of $\II_{(I J | K L)}$ each contain
  either no nonzero entries, or one, equal to $\pm 1$, or two, equal to $\pm1$. The elements of $I\cap J$
  give pairs of rows, in the top and bottom half, which either vanish, or have a single element $\pm 1$ that
  is nonzero in the same place, or a pair of elements with nonzero entries $(1,1)$ and $(1,-1)$. In the first two cases, the
  determinant vanishes. If the element is not in $K\cap L$, it cannot be the third case, so we must have
  $I\cap J \subseteq K \cap L$ for the determinant not to vanish. The same consideration, interchanging rows and
  columns, shows we must have $K\cap L \subseteq I \cap J$  for it not to vanish. Therefore, if it does not vanish,
  the second equality of (\ref{IJKL_cond}) is true.
    \end{proof}
    
     \br
 \label{IJKL_interdep}
 Let
 \be
T:=I\cup J = K\cup L
\ee 
 The equalities  (\ref{IJKL_cond}) are equivalent to the fact that the following are disjoint subsets of $T$:
\bea  
S&\&:= I\cap J = K\cap L, \cr
A&\&:= (I\cap K)\backslash S, \quad B:= (I\cap L)\backslash S,  \cr
\quad C&\&:= (J\cap K)\backslash S,  \quad D:= (J\cap L)\backslash S,
\eea
whose union is $T$,
\be
T = A \sqcup B \sqcup C \sqcup D \sqcup S.
\label{disjoint_union_ABCDS}
\ee
It follows that
\bea
I&\&= A\sqcup B\sqcup S, \quad J= C\sqcup D\sqcup S,  \cr
K&\&= A\sqcup C\sqcup S, \quad  L= B\sqcup D\sqcup S.
\label{IJKL_ABC}
\eea
\er

Denote  the cardinalities of $(A,B,C,D,S,T)$
  \be
  |A|=:a, \quad |B|=: b, \quad |C|=: c, \quad  |D|=: d,\quad |S|=: s, \quad |T|=: t.
  \ee
Then
\bea
 t+s &\& = 2r = i + j, \cr
t-s &\& =2t-2r = a + b + c + d, \cr
 i&\&= a+b+s, \quad j= c+d+s, \cr
 k &\&= a+c+s, \quad
 l= b+d+s,
\label{abcdts}
\eea
and therefore  $a+b+c+d$ must be even.
\br
Note also that, for given $(I,J)$,  conditions \eqref{IJKL_cond} imply that $L$ 
is uniquely determined once $K$ is given, and vice versa:
\be
L = ((I\cup J)\setminus K)\cup(I\cap J), \quad K = ((I\cup J)\setminus L)\cup(I\cap J).
\ee
\er

\begin{remark}
In the subsequent development, up to the statement of Theorem  \ref{pfaffian_CB_identity}, 
the sets $(I,J,K,L)$ will, for simplicity of notation, always be  understood 
as being in increasing order, rather then decreasing, as has been used till now 
(in deference to the conventions used for partitions). This global reversal is purely a matter of 
notational convenience, however, that will leave invariant all relative signs  to be computed.
\end{remark}

It will be pertinent to look at classes of subsets $(I,J,K,L)$  that share the same values of $(a,b,c,d,s)$. 
These are all related by applying some suitable permutation to the case in 
which $(A,B,C,D,S)$ are the following
 \bea
 A &\&= (1, \dots, a),  \quad B = (a+1, \dots, a+b), \cr
 C &\&= (a+b+1, \dots, a+b+c),\quad D = (a+b+c+1, \dots, a+b+c+d),\cr
  T &\& = (1, \dots, t), \quad  S = (a+b+c+d+1, \dots , a+b+c+d+s=t).
  \label{ABCDS_ordering}
 \eea
Note that, if $t<N$, the subset of $(1, \dots, N)$ complementary to $T$ plays no role in the calculation, so
it will be placed, with no  loss of generality, at the end of the sequence.

Putting the sets $A,B,C,D,S$ into the order (\ref{ABCDS_ordering}), while maintaining their
internal order as increasing,  entails a change of sign in the determinant
$\Delta(IJ| KL)$ that needs to be computed. To do this,  define the following.
 \begin{definition}
For any given $(I,J, K, L)$ satisfying  (\ref{IJKL_cond}) of Lemma \ref{Delta_IJKL_abs}, define  
 \be 
\nu(I,J, K, L) :=  \mu(A,B)  + \mu(A,C) + \mu(C,D) + \mu(B, D),
\label{nu_IJKL_def}
\ee 
where, for any pair of disjoint subsets $E,F \subseteq \{1,\dots,N\}$, 
\be 
\mu(E,F) :=  \#\{(i,j)| i\in E, j\in F, i>j\}
\ee 
\end{definition}
Alternatively, let $(K^+ | L^-)$ be the ordered sequence of distinct $2r$ pairs
\be
((K_1,+), \dots, (K_k,+) ,| (L_1,-), \dots ,( L_l,-))
\ee
consisting of the parts of $K$, in order, together with a $+$ sign, followed by those of $L$, together with a $-$ sign.
Define a second sequence $(I^\epsilon | J^\epsilon)$ consisting of another distinct $2r$ pairs
\be
((I_1, \epsilon_1), \dots (I_i, \epsilon_i), (J_1, \epsilon_{i+1}), \dots (J_j,  \epsilon_{2r})),
\ee
where $\{\epsilon_m= \pm\}_{1\le m \le 2r}$, with $\epsilon_m= +$ whenever $I_m \in S$
and $\epsilon_{i+m}= -$ whenever $J_m \in S$, and all other $\epsilon_m$'s (uniquely)
determined by the requirement that the ordered sequence of  pairs $(I^\epsilon | J^\epsilon)$ 
be a permutation of the sequence $(K^+ | L^-)$.
\begin{definition}
Let $\sgn(I,J,K,L)$ be the sign of the permutation in $\SS_{2r}$ that takes  $(K^+ | L^-)$ into $(I^\epsilon | J^\epsilon)$.
\end{definition}
Assuming $k$ and $l$ to be even, these two are related by:
\begin{lemma}
\be
\sgn(I,J,K,L)= (-1)^{\nu(I,J,K,L)+b+jd}
\ee
\end{lemma}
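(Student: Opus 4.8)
The plan is to compute $\sgn(I,J,K,L)$ directly as $(-1)^{\mathrm{inv}}$, where $\mathrm{inv}$ is the number of inversions between the two orderings $(K^+|L^-)$ and $(I^\epsilon|J^\epsilon)$ of one and the same set of $2r$ labelled pairs, and then to match $\mathrm{inv}$ with $\nu(I,J,K,L)+b+jd$ modulo $2$ using the hypotheses that $k$ and $l$ are even. First I would record that, by \eqref{IJKL_cond} and the resulting disjoint decomposition $T=A\sqcup B\sqcup C\sqcup D\sqcup S$ with $I=A\sqcup B\sqcup S$, $J=C\sqcup D\sqcup S$, $K=A\sqcup C\sqcup S$, $L=B\sqcup D\sqcup S$, every element of $T$ occurs among the $2r$ labelled pairs exactly once, as $(x,+)$ if $x\in A\sqcup C$ or $(x,-)$ if $x\in B\sqcup D$, and exactly twice, as $(x,+)$ and $(x,-)$, if $x\in S$; hence the $2r$ labelled pairs are pairwise distinct, the permutation carrying one ordering to the other is well defined, and its sign is $(-1)^{\mathrm{inv}}$, with $\mathrm{inv}$ the number of unordered pairs of labelled elements whose relative order differs in the two sequences.

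The second step is the combinatorial heart: classify the $2r$ labelled elements into six \emph{types} according to whether they come from $A$, $B$, $C$, $D$, the $+$-copy of $S$, or the $-$-copy of $S$, and record in which block each type sits in each ordering (the $K$-block resp.\ $L$-block of $(K^+|L^-)$, and the $I$-block resp.\ $J$-block of $(I^\epsilon|J^\epsilon)$). Within any one block both orderings list elements in increasing numerical order, so two elements of the same type are never inverted, and for two elements of distinct types their relative order is dictated by the block structure when they lie in different blocks in one of the sequences, and by numerical comparison of their labels when they share a block in both. Running through the $\binom{6}{2}=15$ type-pairs one finds the nonzero contributions to be: $\mu(A,B)$ from $(A,B)$; $\mu(A,C)$ from $(A,C)$; $\mu(B,D)$ from $(B,D)$; $\mu(C,D)$ from $(C,D)$; $bc$ from $(B,C)$ (every such pair is inverted); $\mu(S,B)+\mu(B,S)=bs$ from the two type-pairs matching $B$ against the $\pm$-copies of $S$; and $\mu(S,C)+\mu(C,S)=cs$ from the two type-pairs matching $C$ against the $\pm$-copies of $S$ (each of these being a complete bipartite count between disjoint sets); all remaining type-pairs contribute $0$. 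Summing, $\mathrm{inv}=\nu(I,J,K,L)+bc+bs+cs$.

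Finally I would reduce $bc+bs+cs$ to $b+jd$ modulo $2$. Since $l=b+d+s$ is even, $s\equiv b+d\pmod 2$, hence $bs+cs=s(b+c)\equiv(b+d)(b+c)\equiv b+bc+bd+cd$, so that $bc+bs+cs\equiv b+bd+cd\pmod 2$; and since $j=c+d+s$ (giving $jd=cd+d^2+sd\equiv cd+d+(b+d)d\equiv cd+bd$) we get $b+jd\equiv b+cd+bd$, which agrees. This yields $\sgn(I,J,K,L)=(-1)^{\mathrm{inv}}=(-1)^{\nu+b+jd}$, as claimed.

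I expect the main obstacle to be the bookkeeping in the second step: for each of the fifteen type-pairs one must get right which block each type occupies in each of the two orderings, and therefore whether that pair contributes an inversion count of the form $\mu(\,\cdot\,,\cdot\,)$, a full product, or nothing — a single slip flips the parity. The parity manipulation of the third step is routine once one notices that it is exactly the evenness of $l$ (and, symmetrically, of $k$) that is being used. A convenient final check is to evaluate both sides on the canonical configuration \eqref{ABCDS_ordering}, where every $\mu$-term vanishes and $\sgn$ can be read off from a single block transposition.
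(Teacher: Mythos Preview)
Your proposal is correct and follows essentially the same route the paper sketches: the paper's own justification is a single sentence saying the result ``follows from comparing the definitions of $\sgn(I,J,K,L)$ and $\nu(I,J,K,L)$, using the relations \eqref{abcdts}, the fact that $k$ and $l$ are even and $m^2\equiv m\ \mathrm{mod}\,2$,'' and your inversion count over the six types together with the parity reduction via $s\equiv b+d\pmod 2$ is precisely that comparison, carried out in full detail.
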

This follows from comparing the definitions of  $\sgn(I,J,K,L)$ and $\nu(I,J,K,L)$, using
the relations (\ref{abcdts}), the fact that $k$ and $l$ are even and $m^2 \equiv m \  \mod \, 2$,
which together imply
\be
 \pm b \pm c \equiv \pm a\pm d \ \ \mod \, 2.
\ee
The following gives the complete expression for $\Delta(IJ| KL)$.
  \begin{lemma}
 \label{Delta_IJKL_result }
 \be
 \Delta_{(IJ | KL)}=  (-1)^{jd + \nu(I,J,K,L)}  2^s \delta_{I\cup J, K\cup L} \delta_{I\cap J, K\cap L}.
 \label{Delta_value_sign_jd}
\ee
 \end{lemma}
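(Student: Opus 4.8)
The plan is to evaluate $\Delta_{(IJ|KL)}=\det(\II_{(IJ|KL)})$ by reordering the rows of $\II_{(IJ|KL)}$ so that it becomes block–diagonal, then identifying the sign of that row permutation with $\sgn(I,J,K,L)$, which the preceding Lemma rewrites in terms of $\nu(I,J,K,L)$. To begin, if one of the conditions \eqref{IJKL_cond} fails then Lemma \ref{Delta_IJKL_abs} gives $\Delta_{(IJ|KL)}=0$, while the right–hand side of \eqref{Delta_value_sign_jd} also vanishes through the Kronecker deltas; so the two sides agree, and from now on we may assume \eqref{IJKL_cond}, set $\delta_{I\cup J,K\cup L}\delta_{I\cap J,K\cap L}=1$, and use the decomposition $T=A\sqcup B\sqcup C\sqcup D\sqcup S$ of Remark \ref{IJKL_interdep} with $I,J,K,L$ expressed through \eqref{IJKL_ABC}.

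The key is an entrywise description of $\II_{(IJ|KL)}$. Its $2r$ rows carry the integer labels $I_1,\dots,I_i$ (coming from the top half of $\II_{2N}$) and $J_1,\dots,J_j$ (from the bottom half), and its $2r$ columns carry the labels $K_1,\dots,K_k$ (from the left half) and $L_1,\dots,L_l$ (from the right half). Since $\II_{2N}=\left(\begin{smallmatrix}\Ib_N&\Ib_N\\\Ib_N&-\Ib_N\end{smallmatrix}\right)$, the entry in a row labelled $v$ and a column labelled $v'$ is $0$ unless $v=v'$; when $v=v'$ it equals $-1$ if a $J$-label faces an $L$-label, and $+1$ in every other case. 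Under \eqref{IJKL_cond} and \eqref{IJKL_ABC}, each $v\in A\cup B$ occurs once as an $I$-label and once as a $K$-label (if $v\in A$) or an $L$-label (if $v\in B$), never as a $J$-label; each $v\in C$ occurs once as a $J$-label and once as a $K$-label; each $v\in D$ occurs once as a $J$-label and once as an $L$-label; and each $v\in S$ occurs once each as an $I$-, $J$-, $K$- and $L$-label. Hence the unique nonzero entry attached to $v$ is $+1$ for $v\in A\cup B\cup C$, is $-1$ for $v\in D$, and for $v\in S$ the $2\times2$ submatrix with rows the $I$- and $J$-labels and columns the $K$- and $L$-labels is $\left(\begin{smallmatrix}1&1\\1&-1\end{smallmatrix}\right)$, of determinant $-2$.

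Consequently, reordering the rows (leaving the columns in the order $(K^{+}|L^{-})$) so that for each $v$ the row(s) labelled $v$ sit opposite the column(s) labelled $v$, with the $I$-labelled row placed before the $J$-labelled one inside each $S$-block, turns $\II_{(IJ|KL)}$ into a block–diagonal matrix with $a+b+c$ blocks $(+1)$, $d$ blocks $(-1)$, and $s$ copies of $\left(\begin{smallmatrix}1&1\\1&-1\end{smallmatrix}\right)$. This row permutation carries the sequence $(I^{\epsilon}|J^{\epsilon})$ into $(K^{+}|L^{-})$, so by the very definition of $\sgn(I,J,K,L)$ its sign is $\sgn(I,J,K,L)$, whence
\[
\Delta_{(IJ|KL)}=\sgn(I,J,K,L)\,(-1)^{d}(-2)^{s}=\sgn(I,J,K,L)\,(-1)^{d+s}2^{s}.
\]
Finally, by the preceding Lemma $\sgn(I,J,K,L)=(-1)^{\nu(I,J,K,L)+b+jd}$, and since $b+d+s=l$ is even, $\sgn(I,J,K,L)(-1)^{d+s}=(-1)^{\nu+jd+l}=(-1)^{jd+\nu}$, which is exactly \eqref{Delta_value_sign_jd}.

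I expect the main obstacle to be the bookkeeping in the last two steps: one must check carefully that the row reordering which block–diagonalizes $\II_{(IJ|KL)}$ is literally the permutation appearing in the definition of $\sgn(I,J,K,L)$ — in particular that the side-assignments $\epsilon_m$ really encode "which column does row $m$ meet", and that the internal ordering of each $S$-block is handled consistently so that every such block contributes $-2$ rather than $+2$ (a swap inside one block flips both the $2\times2$ determinant and the permutation sign, hence is harmless). Once this is pinned down, the statement follows immediately from Lemma \ref{Delta_IJKL_abs}, the combinatorics of Remark \ref{IJKL_interdep}, and the preceding sign Lemma.
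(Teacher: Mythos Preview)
Your argument is correct and takes a genuinely different route from the paper's.  The paper reduces to the canonical ordering \eqref{ABCDS_ordering} by shuffling both rows \emph{and} columns of $\II_{(IJ|KL)}$, so that the combined sign of the reduction is exactly $(-1)^{\nu(I,J,K,L)}$; it then reads off the determinant of the resulting explicit $6{\times}6$-block matrix as $(-1)^{(b+s)c+bs+d+s}2^{s}$ and simplifies the exponent using $c=j-d-s$ and $l$ even.  You instead permute only the rows, so that each row sits opposite a column carrying the same label, recognize this permutation as the very one whose sign defines $\sgn(I,J,K,L)$, and then invoke the preceding Lemma to pass from $\sgn$ to $(-1)^{\nu+b+jd}$.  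Your route is shorter and more conceptual --- it makes transparent why $\sgn(I,J,K,L)$ appears --- while the paper's route has the advantage of producing along the way the intermediate formula \eqref{Delta_value_sign}, which holds without any parity assumption on $k$ or $l$; your final step imports both hypotheses through the sign Lemma.

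One small wording point worth tightening: after your row permutation the matrix is not literally block-diagonal when $S\ne\emptyset$, since for $v\in S$ the $K$-column and $L$-column labeled $v$ are not adjacent.  What is true is that the reordered matrix $M'$ satisfies $M'_{pq}=0$ unless columns $p$ and $q$ carry the same label, so it is a direct sum over the label classes; a further \emph{simultaneous} row-and-column permutation (which does not affect the determinant) then makes it genuinely block-diagonal with the blocks you describe, and your factorization $\det(M')=(-1)^{d}(-2)^{s}$ stands.  Your closing remark about a swap inside an $S$-block flipping both the $2{\times}2$ determinant and the permutation sign already anticipates the only possible ambiguity in this step.
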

 \begin{proof}
The idea is to reduce to the case when the basis elements are as in \eqref{ABCDS_ordering}, and  compute the change of sign.
Assume first that the entries of the matrix $\II_{(I J | K L)}$ are chosen so that $(A,B,C,D,S)$
are as in (\ref{ABCDS_ordering}), and denote this as $\II^0_{(I J | K L)}$.
Then $\II^0_{(IJ | KL)}$ has the  block form
 \be
 \II^0_{(IJ | KL)} = \begin{pmatrix} \Ib_a &\Ob_{ac}& \Ob_{as} & \Ob_{ab} & \Ob_{ad} &\Ob_{as}\cr
   \0b_{ba} &\Ob_{bc}& \Ob_{bs} & \Ib_{b} & \Ob_{bd} &\Ob_{bs}\cr
    \0b_{sa} &\Ob_{sc}& \Ib_{s} & \0b_{sb} & \0b_{sd} &\Ib_{s}\cr
     \0b_{ca} &\Ib_{c}& \Ob_{cs} & \0b_{cb} & \0b_{cd} &\0b_{cs}\cr
 \0b_{da} &\Ob_{dc}& \Ob_{ds} & \0b_{db} & -\Ib_{d} &\0b_{ds}\cr
 \0b_{sa} &\Ob_{sc}& \Ib_{s} & \0b_{sb} & \0b_{sd} &-\Ib_{s},
 \end{pmatrix},
 \ee
 where $\Ib_n$ denotes the $n\times n$ identity matrix and $\Ob_{mn}$ denotes the $m\times n$ matrix with vanishing entries.
By successively eliminating unit matrix blocks, the determinant therefore reduces in this case  to
\bea
\Delta^0_{(IJ | KL)} &\& := \det(\II^0_{(IJ|KL)}) 
 = (-1)^{(b+s)c+ bs +d} \det \begin{pmatrix} \Ib_s & \Ib_s \cr \Ib_s &- \Ib_s \end{pmatrix} \cr
 &\&  = (-1)^{(b+s)c+ bs +d+s}  2^s.
\eea

Now let $\sigma$ be the permutation of shuffle type that puts $A, B,C,D,S$  in the order (\ref{ABCDS_ordering}), preserving the internal order of each of  $A, B,C,D,S$.  On the first $r$ rows of $\II_{(I J | K L)}$, indexed by $I$,  this puts the elements of $I$ in the order $A,B,S$ by sliding the elements of $A$ past $B$ and $S$, and  the elements of $B$ past the elements of $S$.   The sign of the determinant therefore changes by $(-1)^{\mu(A,B) + \mu(A,S) + \mu(B,S)}$. Similarly, moving the elements of the second set of rows indexed by $J$ into the order $C,D,S$ changes signs by $(-1)^{\mu(A,B) + \mu(A,S) + \mu(B,S)}$. Moving the columns of the matrix $\II_{(I J | K L)}$ indexed by $K,L$ into the order induced by (\ref{ABCDS_ordering}) gives a sign change $(-1)^{\mu(A,C) + \mu(A,S) + \mu(C,S)+\mu(B,D) + \mu(B,S) + \mu(D,S)}$. Composing all these signs therfore gives a total sign change $(-1)^{\nu(I,J,K,L)}$
so that
\be
\Delta(IJ|KL) = (-1)^{\nu(I,J,K,L)  + (b+s)c+ bs +d+s}2^s \delta_{I\cup J, K\cup L} \delta_{I\cap J, K\cap L}.
\label{Delta_value_sign}
\ee
To obtain (\ref{Delta_value_sign_jd}), since $l$ is even,  substitute   
\be
c = j -d -s, \quad b+s \equiv d \, (\mod\,2) 
\ee
in the exponent of $-1$ in (\ref{Delta_value_sign}) and reduce  $\mod\,2$. It then follows that for general choice
of $(I,J,K,L)$ the determinant $\Delta(IJ | KL)$ is given by (\ref{Delta_value_sign_jd}).
 \end{proof}
Henceforth, we also assume  that $i=j=r$, so eq. (\ref{Delta_value_sign_jd}) becomes
\bea
 \Delta_{(IJ | KL)} &\&=  (-1)^{rd + \nu(I,J,K,L)}  2^s \delta_{I\cup J, K\cup L} \delta_{I\cap J, K\cap L} \cr
&\& =  (-1)^b 2^s\sgn(I,J,K,L) \delta_{I\cup J, K\cup L} \delta_{I\cap J, K\cap L}.
 \label{Delta_value_sign_rd}
\eea
Now let $A$ be a skew symmetric $N \times N$ matrix and, for any pair $K, L \subseteq \{1, \dots, N\}$
of ordered subsets of cardinalities $k, l$, respectively,  with $k+l=2r$, let  $A_{(K | L)}$ denote the $k \times l $ 
submatrix of $A$ whose rows and columns are the restriction of those of $A$ in positions $K$ and $L$, respectively.

For $w^0\in \Gr^0_V(V+V^*)$ in the big cell, choose $A$ to  equal the affine coordinate matrix   
\be
  A:=A^\emptyset(w^0).
  \ee
Theorem \ref{main_theorem}  is then  equivalent to the following identity.
 \begin{theorem}
 \label{pfaffian_CB_identity}
 \be
 \det(A_{(I | J)}) =  {(-1)^{\tfrac{1}{2}r(r-1)}\over 2^r }\sum_{K, L | k, l \text{ even} \atop{ {K\cup L = I \cup J \atop K\cap L = I\cap J}}}(-1)^{{l\over 2}  }\Delta_{(I J | K L)}
 \Pf(A_{(K | K)})\Pf( A_{(L | L)}).
 \label{pfaff_CB_ident}
 \ee
 \end{theorem}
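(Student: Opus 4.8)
The plan is to prove \eqref{pfaff_CB_ident} in two stages: first reduce it to Theorem~\ref{main_theorem}, which has already been established, and then give an independent verification using exterior algebra. For the reduction I would work on the big cell $U^0_\emptyset$, so that $A=A^\emptyset(w^0)$, and pair the relation $\Pl_V(w^0)=\beta_N\circ\Ca_V^D(w^0)$ of Theorem~\ref{main_theorem} with each basis vector $\sigma=f_I\wedge *e_J$, $|I|=|J|=r$. On the left, Lemma~\ref{affinePlucker} identifies $\langle f_I\wedge *e_J\mid\Pl_V(w^0)\rangle$ with $\det(A_{(I|J)})$ (an honest equality once one normalises $\pi_\emptyset=\kappa_\emptyset=1$ on $U^0_\emptyset$). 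On the right, substituting the expansion \eqref{cartan_map_exp} of $\Ca_V(w^0)$ and Proposition~\ref{kappa_alpha_big_cell}, namely $\kappa_\gamma=(-1)^{|\gamma|/2}\Pf(A_{(I(\gamma)|I(\gamma))})$, turns $\langle\sigma\mid\beta_N(\Ca_V(w^0),\Ca_V(w^0))\rangle$ into a bilinear sum of products $\Pf(A_{(K|K)})\Pf(A_{(L|L)})$ whose coefficients are the structure constants $c^{KL}_{IJ}:=\langle f_I\wedge *e_J\mid\beta_N(* e_K,* e_L)\rangle$. Comparing with \eqref{pfaff_CB_ident}, the theorem becomes equivalent to the assertion that $c^{KL}_{IJ}$ equals $\Delta_{(IJ|KL)}$ up to an explicit sign and prefactor.

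To evaluate the structure constants, note that by Definition~\ref{def_beta_0_N} and the self‑adjointness arguments of Section~\ref{proof_main_theorem}, $c^{KL}_{IJ}=(* e_K,\,C\,\Gamma_{f_I\wedge *e_J}\,* e_L)$, a matrix element of fermionic creation and annihilation operators on $\Lambda(V)$. I would compute it by writing $f_I\wedge *e_J=\pm e_M$ with $M=(N+I)\sqcup\tilde J\subseteq\{1,\dots,2N\}$, expanding $\Gamma_{e_M}=\AA(\Gamma_{e_{m_1}}\cdots\Gamma_{e_{m_N}})$ and $C=\prod_j(\psi_j-\psi^\dagger_j)$, and moving all annihilation operators to the right by repeated use of $[\psi_i,\psi^\dagger_j]_+=\delta_{ij}$; the outcome is the expansion of the determinant of the $\{0,\pm1\}$ incidence matrix $\II_{(IJ|KL)}$, so $c^{KL}_{IJ}=\pm\Delta_{(IJ|KL)}$. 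Lemma~\ref{Delta_IJKL_abs} then forces the vanishing of $c^{KL}_{IJ}$ unless $K\cup L=I\cup J$, $K\cap L=I\cap J$ (and $k,l$ even), giving exactly the range of summation in \eqref{pfaff_CB_ident}, while the explicit evaluation of $\Delta_{(IJ|KL)}$ proved above supplies the factor $2^{s}$ and its sign. It then remains to collect the signs coming from $\AA$ and $C$, from $\sgn(I(\gamma))$, from the $(-1)^{|\gamma|/2}$ of Proposition~\ref{kappa_alpha_big_cell}, and from the global prefactor $(-1)^{\frac12 r(r-1)}2^{-r}$, and to check that they combine to the coefficient $(-1)^{\frac12 r(r-1)}2^{-r}(-1)^{l/2}\Delta_{(IJ|KL)}$ in \eqref{pfaff_CB_ident}. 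Running the argument backwards, \eqref{pfaff_CB_ident} for all $(I,J)$ recovers every Pl\"ucker coordinate of $w^0$ on $U^0_\emptyset$ from its Cartan coordinates, which by $\SO(V+V^*,Q)$‑equivariance returns Theorem~\ref{main_theorem}; hence the two are equivalent.

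For the promised direct proof of \eqref{pfaff_CB_ident} as an identity for an arbitrary skew $N\times N$ matrix $A$, I would argue inside $\Lambda(V+V^*)$ using only inner and outer products. Reducing to $I\cup J=\{1,\dots,N\}$ (the complement of $I\cup J$ being inert, as observed before Lemma~\ref{Delta_IJKL_abs}), one realises $\det(A_{(I|J)})$ as the pairing of the single $N$‑fold outer product $w_1\wedge\cdots\wedge w_N$, $w_i=e_i+\sum_jA_{ji}f_j$, with $f_I\wedge *e_J$, and each $\Pf(A_{(K|K)})$ as the $* e_K$‑component of $\prod_i\Gamma_{w_i}$ applied to $\Omega_V$, i.e.\ of $e^{\sum_{i<j}A_{ij}\psi^\dagger_i\psi^\dagger_j}\Omega_V$. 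Using the Clifford relations together with the adjointness of exterior and interior multiplication under $(\,\cdot\,,\,\cdot\,)$, one expands this single wedge bilinearly in those Pfaffian components; the matrix of the resulting quadratic form is again read off from the incidence matrices $\II_{(IJ|KL)}$, so its coefficients are the $\Delta_{(IJ|KL)}$. Equivalently, the identity may be checked monomial by monomial in the entries $A_{pq}$, matching the permutation expansion of $\det(A_{(I|J)})$ with the double perfect‑matching expansion of the right side via the decomposition of the graph of a bijection $J\to I$ into alternating paths and cycles, the $2^{s}$ counting the two‑colourings of the pieces through a vertex of $S=I\cap J$.

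The step I expect to be the main obstacle is the identification $c^{KL}_{IJ}=\pm\Delta_{(IJ|KL)}$: the antisymmetrisation $\AA$ hidden in $\Gamma_{f_I\wedge *e_J}$ is not trivial once $I\cap\tilde J\neq\emptyset$, so $\Gamma_{f_I\wedge *e_J}\neq\Gamma_{f_I}\Gamma_{*e_J}$ in general, and it is exactly these ``interleaved'' contributions that make the sum in \eqref{pfaff_CB_ident} run over all $(K,L)$ with $K\cup L=I\cup J$, $K\cap L=I\cap J$ rather than only over $K\subseteq J$. Controlling the signs through this antisymmetrisation and reconciling them with the sign built into $\Delta_{(IJ|KL)}$ and with the prefactor $(-1)^{\frac12 r(r-1)}2^{-r}$ is where the genuine bookkeeping lies; everything else is either a routine expansion or an appeal to results already proved in this section.
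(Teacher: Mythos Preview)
Your first stage, the reduction to Theorem~\ref{main_theorem}, is essentially what the paper does in Proposition~\ref{equiv_main_theorem_pfaff_CB_ident} together with Lemma~\ref{hat_Delta_Delta_rel}: one pairs both sides of $\Pl_V=\beta_N\circ\Ca_V^D$ against $\sigma=f_I\wedge *e_J$, inserts the Cartan expansion and Proposition~\ref{kappa_alpha_big_cell}, and is left with computing the structure constants $\widehat\Delta_{(IJ|KL)}:=(\Gamma_{f_I\wedge *e_J}e_K,*e_L)$ and comparing them with $\Delta_{(IJ|KL)}$. You have correctly located the genuine work in the antisymmetrisation hidden in $\Gamma_{f_I\wedge *e_J}$ when $I\cap\tilde J\neq\emptyset$; the paper handles this exactly as you anticipate, by first reducing to the canonical ordering (\ref{ABCDS_ordering}) of $A,B,C,D,S$, evaluating $\widehat\Delta$ and $\Delta$ there, and then checking that both pick up the same shuffle sign $(-1)^{\nu(I,J,K,L)}$ under a general ordering. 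The outcome is $\widehat\Delta_{(IJ|KL)}=(-1)^{\frac12 r(r-1)}2^{-r}\Delta_{(IJ|KL)}$, which fixes the prefactor once you normalise at $I=J$.

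Your second stage, however, differs substantially from the paper's direct proof, and your version is either circular or only sketched. Realising $\det(A_{(I|J)})$ via the Pl\"ucker wedge $w_1\wedge\cdots\wedge w_N$ and the Pfaffians via the Cartan image $e^{\sum A_{ij}\psi^\dagger_i\psi^\dagger_j}\Omega_V$ is precisely the content of Theorem~\ref{main_theorem} again, so it does not give an independent proof; and the alternating-path/cycle combinatorics is plausible but not carried out. The paper's direct argument bypasses all Clifford algebra and almost all sign bookkeeping: it introduces the second basis $g_i=e_i+f_i$, $h_i=e_i-f_i$ of $V+V^*$ and the $2$-form $\omega=\sum_{k,l}A_{kl}\,g_k\wedge h_l$, observes that by skew symmetry of $A$ one has $\omega=2(\omega_1-\omega_2)$ with $\omega_1=\sum_{i<j}A_{ij}e_i\wedge e_j$, $\omega_2=\sum_{i<j}A_{ij}f_i\wedge f_j$, and then expands $\omega^{\wedge r}$ in two ways. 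In the $(g_I\wedge h_J)$ basis the coefficients are $(-1)^{\frac12 r(r-1)}r!\det(A_{(I|J)})$; in the $(e_K\wedge f_L)$ basis, via the binomial expansion of $(\omega_1-\omega_2)^{\wedge r}$ and the standard identity $\omega_1^{\wedge m}/m!=\sum_{|K|=2m}\Pf(A_{(K|K)})e_K$, they are $2^r r!\,(-1)^{l/2}\Pf(A_{(K|K)})\Pf(A_{(L|L)})$. The change-of-basis matrix between these two bases is exactly $\II_{2N}$, so its minors $\Delta_{(IJ|KL)}$ appear automatically (Lemma~\ref{IJKL_cond_lemma}), and \eqref{pfaff_CB_ident} drops out with no further sign chasing. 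This is both shorter and genuinely independent of Theorem~\ref{main_theorem}; I would replace your second stage with this $2$-form argument.
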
 
 \br
 Note that the increasing order within the sets (I,J,K,L) may be simultaneously reversed without changing
 the equality (\ref{pfaff_CB_ident}). This must done to revert to the decreasing order
 used in Section \ref{plucker_cartan}. Then  $(I , J)$, written in decreasing order, is
 related to the strict partitions $(\alpha, \beta)$ that determine the partition $\lambda(\alpha, \beta)$
 by
 \be
 I =I(\alpha), \quad J=I(\beta),
 \label{IJ_alpha_beta}
 \ee
and hence have equal cardinality $|I|=|J| = r$, while $(K, L)$ are related to the strict 
partitions $(\gamma, \delta)$ that determine the pseudosymmetric partitions 
$\lambda(\gamma)$ and $\lambda(\delta )$ by
\be
K =I(\gamma), \quad
L =I(\delta),
\label{KL_gamma_delta}
\ee
 so
\be
\det(A_{(I |J)})=\pi_{\lambda(\alpha,\beta')}(w), \quad \Pf(A_{(K |K)}) = \kappa_\gamma(w), \quad \Pf(A_{(L |L)}) = \kappa_\delta(w).
\ee
For $I=J$ the only admissible pair in the above sum is $(K,L)=(I,I)$, and therefore \eqref{pfaff_CB_ident} reduces 
to the standard identity
\be
\det(A_{(I | I)}) =  
\begin{cases}
\Pf(A_{(I | I)})^2 & \text{ if } r \text{ is even}\\
0 & \text{ if } r \text{ is odd.}
\end{cases}
\ee
\er

In view of the fact that, by  (\ref{IJKL_ABC}),  $I$, $J$ and $K$ uniquely determine $L$, 
and $I$, $J$ and  $L$ uniquely determine $K$, eq.~\eqref{pfaff_CB_ident} really consists only 
of a single sum, either over the variable $K$ or $L$ which, using eq.~(\ref{Delta_value_sign_rd}) gives:
\begin{corollary}
\label{single_sum_det_pfaff_ident}
\bea
 \det(A_{(I | J)})  &\&=  {(-1)^{\frac{1}{2}r(r-1)}\over  2^{r-s}}\sum_{K, L | k, l \text{ even} \atop{ {K\cup L 
 = I \cup J \atop K\cap L = I\cap J}}}(-1)^{{l\over 2}+rd+ \nu(I,J, K, L)}
 \Pf(A_{(K | K)})\Pf( A_{(L | L)})\cr
&\& = {(-1)^{\frac{1}{2}r(r-1)}\over  2^{r-s}}\sum_{K, L | k, l \text{ even} \atop{ {K\cup L = I \cup J \atop K\cap L = I\cap J}}}(-1)^{{l \over 2}+b }\sgn(I,J,K,L)
 \Pf(A_{(K | K)})\Pf( A_{(L | L)}).\cr
 &\&
 \label{single_sum_pfaff_CB_ident}
 \eea
\end{corollary}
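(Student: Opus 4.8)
The plan is to derive Corollary~\ref{single_sum_det_pfaff_ident} as an immediate substitution into the identity \eqref{pfaff_CB_ident} of Theorem~\ref{pfaffian_CB_identity}, using the explicit evaluation of $\Delta_{(IJ|KL)}$ obtained above. Recall that in the case $i=j=r$ relevant here, Lemma~\ref{Delta_IJKL_result } (cf.\ \eqref{Delta_value_sign_rd}) gives
\[
\Delta_{(IJ|KL)} = (-1)^{rd+\nu(I,J,K,L)}\,2^{s}\,\delta_{I\cup J,\,K\cup L}\,\delta_{I\cap J,\,K\cap L}
= (-1)^{b}\,2^{s}\,\sgn(I,J,K,L)\,\delta_{I\cup J,\,K\cup L}\,\delta_{I\cap J,\,K\cap L}.
\]
Since the sum in \eqref{pfaff_CB_ident} already ranges only over pairs $(K,L)$ with $K\cup L=I\cup J$ and $K\cap L=I\cap J$, the two Kronecker deltas are identically $1$ on the index set and may be dropped. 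Inserting the first of the two expressions for $\Delta_{(IJ|KL)}$ and absorbing the factor $2^{s}$ into the prefactor $2^{-r}$ to form $2^{-(r-s)}$ yields the first equality of \eqref{single_sum_pfaff_CB_ident}; replacing $(-1)^{rd+\nu(I,J,K,L)}$ by $(-1)^{b}\sgn(I,J,K,L)$ — which is precisely the lemma relating $\nu$ and $\sgn$, specialized to $j=r$ — gives the second.

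It then remains only to explain why this is a \emph{single} sum. For this I would invoke Remark~\ref{IJKL_interdep} and the remark following it: once $I$ and $J$ are fixed, the constraints $K\cup L=I\cup J$, $K\cap L=I\cap J$ determine $L$ from $K$ via $L=((I\cup J)\setminus K)\cup(I\cap J)$, and symmetrically $K$ from $L$. Hence the apparent double sum over $(K,L)$ in \eqref{pfaff_CB_ident} and \eqref{single_sum_pfaff_CB_ident} is in fact indexed by a single collection of subsets — the even-cardinality $K$ with $I\cap J\subseteq K\subseteq I\cup J$ — so no further summation survives and the corollary is complete.

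There is no genuine obstacle here; the statement is pure bookkeeping layered on Theorem~\ref{pfaffian_CB_identity} and Lemma~\ref{Delta_IJKL_result }. The only point that requires a moment's care is the consistency of the two sign forms $(-1)^{rd+\nu(I,J,K,L)}$ and $(-1)^{b}\sgn(I,J,K,L)$ recorded in \eqref{Delta_value_sign_rd}, which is already settled by the parity relations \eqref{abcdts} (in particular $b+s\equiv d \ \mod \, 2$ together with the evenness of $k,l$) and the identification of $\sgn(I,J,K,L)$ with $(-1)^{\nu(I,J,K,L)+b+jd}$.
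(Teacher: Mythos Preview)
Your proposal is correct and matches the paper's own derivation: the corollary is stated there as an immediate consequence of substituting eq.~\eqref{Delta_value_sign_rd} into eq.~\eqref{pfaff_CB_ident}, together with the observation (from the remark following \eqref{IJKL_ABC}) that $K$ and $L$ determine one another, so that the double sum collapses to a single one. The only point you might make explicit is that $s=|I\cap J|$ depends on $(I,J)$ alone and not on the summation variables $(K,L)$, which is what licenses pulling the factor $2^{s}$ outside the sum; this is implicit in your argument but worth a clause.
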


Before proceeding to a direct proof of eq.~(\ref{pfaff_CB_ident})  we show that,
under the identifications (\ref{IJ_alpha_beta}), (\ref{KL_gamma_delta}),
it  is just the coordinate expression of eq.~(\ref{beta_sigma_Ca_Ca_dual}), and hence,
up to projectivization, is equivalent to Theorem \ref{main_theorem} expressed in affine coordinates.

\begin{proposition} 
\label{equiv_main_theorem_pfaff_CB_ident}
Under the identifications (\ref{IJ_alpha_beta}), (\ref{KL_gamma_delta}),
eqs.~(\ref{beta_sigma_Ca_Ca_dual}) and (\ref{pfaff_CB_ident}) are equivalent (up to projectivization).
\end{proposition}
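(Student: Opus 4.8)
The plan is to rewrite each side of (\ref{beta_sigma_Ca_Ca_dual}) in the affine coordinates of the big cell $U^0_\emptyset$ and to observe that, term by term, it turns into the matching side of (\ref{pfaff_CB_ident}). Since $U^0_\emptyset$ is Zariski dense in its component of $\Gr^0_V(V+V^*)$, while (\ref{pfaff_CB_ident}) is a polynomial identity in the entries of the generic skew matrix $A = A^\emptyset(w^0)$ parametrising $U^0_\emptyset$, this gives the equivalence in both directions; the projective ambiguity is fixed on $U^0_\emptyset$ by the compatible normalisations $\pi_\emptyset(w^0) = \kappa_\emptyset(w^0) = 1$, for which both sides of (\ref{pfaff_CB_ident}) reduce to $1$ when $I = J = \emptyset$. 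First I would use the reduction already made in the proof of Theorem~\ref{main_theorem}: it is enough to test (\ref{beta_sigma_Ca_Ca_dual}) on the basis elements $\sigma = f_{I(\alpha)}\wedge *e_{I(\beta)}$ of $\Lambda^N(V+V^*)$, with $\alpha,\beta$ strict partitions such that $\lambda(\alpha,\beta')\subseteq(N)^N$, which by (\ref{lambda_alpha_beta_basis_el}) are exactly the $|\lambda(\alpha,\beta')\rangle$. Setting $I = I(\alpha)$, $J = I(\beta)$ as in (\ref{IJ_alpha_beta}), so $|I| = |J| = r$, Lemma~\ref{affinePlucker} identifies the right-hand side of (\ref{beta_sigma_Ca_Ca_dual}) with $\pi_{\lambda(\alpha,\beta')}(w^0) = \det(A_{(I|J)})$, the left-hand side of (\ref{pfaff_CB_ident}).

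Next I would expand the left-hand side of (\ref{beta_sigma_Ca_Ca_dual}). Using (\ref{cartan_map_exp}) and (\ref{cartan_map_affine_basis}) one has $\Ca_V(w^0) = \big[\sum_\gamma \kappa_\gamma(w^0)\,*e_{I(\gamma)}\big]$, the sum over strict partitions of even length; applying $C$ term by term and using (\ref{C_Hodge_sign}), (\ref{C_squared}) and the value of $*{*}$ on $\Lambda^p(V)$, one gets $C(*e_{I(\delta)}) = c_{|\delta|}\,e_{I(\delta)}$ for explicit signs $c_l$ depending only on $l = |\delta|$. Factoring $\Gamma_\sigma = \Gamma_{f_I}\Gamma_{*e_J}$ by (\ref{Gamma_FI_GJ_antisym}) and using bilinearity of $(\,,\,)$, the left-hand side of (\ref{beta_sigma_Ca_Ca_dual}) becomes $\sum_{\gamma,\delta} c_{|\delta|}\,\kappa_\gamma(w^0)\kappa_\delta(w^0)\,\big(\Gamma_{f_I}\Gamma_{*e_J}\,e_{I(\delta)},\,*e_{I(\gamma)}\big)$. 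Substituting $\kappa_\gamma = (-1)^{k/2}\Pf(A_{(K|K)})$, $\kappa_\delta = (-1)^{l/2}\Pf(A_{(L|L)})$ from Proposition~\ref{kappa_alpha_big_cell}/(\ref{kappa_alpha_pfaffian}) with $K = I(\gamma)$, $L = I(\delta)$ as in (\ref{KL_gamma_delta}), and equating the coefficients of $\Pf(A_{(K|K)})\Pf(A_{(L|L)})$ against (\ref{pfaff_CB_ident}), the whole statement collapses to the single scalar identity
\be
c_{l}\,\big(\Gamma_{f_I}\Gamma_{*e_J}\,e_{L},\,*e_{K}\big) \;=\; \frac{(-1)^{\frac12 r(r-1)+r}}{2^{\,r}}\,(-1)^{l/2}\,\Delta_{(IJ|KL)},
\ee
valid for all ordered $I,J,K,L$ with $|I| = |J| = r$, the factor $(-1)^r = (-1)^{(k+l)/2}$ coming from the Pfaffian normalisations (using $k+l = 2r$ on the support).

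To prove this I would evaluate the inner product directly: write $\Gamma_{f_I}$ as a product of the interior multiplications $i_{f_i}$, $i\in I$, and $\Gamma_{*e_J} = \sgn(J)\,e_{\widetilde J_1}\wedge\cdots$ (with $\widetilde J$ the complement of $J$ in $\{1,\dots,N\}$) as a sign times a product of exterior multiplications, then move interior past exterior factors and read off the coefficient of $*e_K = \sgn(K)\,e_{\widetilde K_1}\wedge\cdots$. The inner product vanishes unless $K\cup L = I\cup J$ and $K\cap L = I\cap J$, which is Lemma~\ref{Delta_IJKL_abs}; when these hold, the commutation of exterior against interior factors is governed exactly by the block matrix $\II_{2N} = \begin{pmatrix}\Ib_N & \Ib_N \\ \Ib_N & -\Ib_N\end{pmatrix}$ defining $\Delta_{(IJ|KL)}$, each pair $(e_i,f_i)$ producing the $2\times 2$ block $\left(\begin{smallmatrix}1 & 1 \\ 1 & -1\end{smallmatrix}\right)$ precisely as in the treatment of $\II^0_{(IJ|KL)}$ in the proof of Lemma~\ref{Delta_IJKL_result }. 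Hence the combinatorial sign emerging from the reordering is the $\nu(I,J,K,L)$ of (\ref{nu_IJKL_def}) (equivalently $(-1)^b\sgn(I,J,K,L)$ as in (\ref{Delta_value_sign_rd})), and the power of $2$ is the $2^s$ of Lemma~\ref{Delta_IJKL_result }, which together with the $1/2^{\,r}$ of (\ref{pfaff_CB_ident}) reproduces the $1/2^{\,r-s}$ of the single-sum form (\ref{single_sum_pfaff_CB_ident}). Combining this with the Hodge-star signs $\sgn(I),\sgn(J),\sgn(K)$, the $C$-signs $(-1)^{\frac12 s(s-1)+sN}$, $(-1)^{\frac12 N(N+1)}$ and the sign $c_l$ should deliver the displayed identity.

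The hard part is this last piece of sign bookkeeping: the Hodge-star signs, the two $C$-signs, the sign $c_l$, the antisymmetrisation, and the exterior/interior reordering must all assemble into $(-1)^{\frac12 r(r-1)+r}(-1)^{l/2}\Delta_{(IJ|KL)}/2^{\,r}$. A minor additional nuisance is the clash between the decreasing ordering used for partitions and the increasing ordering adopted in Section~\ref{pfaffian_CB}; by the remark following Theorem~\ref{pfaffian_CB_identity} a simultaneous reversal leaves every relative sign unchanged, so one may carry out the computation in whichever order is convenient and translate back via (\ref{IJ_alpha_beta})--(\ref{KL_gamma_delta}). Because every step of the translation above is an identity holding throughout the dense open set $U^0_\emptyset$, establishing the scalar identity proves the equivalence of (\ref{beta_sigma_Ca_Ca_dual}) and (\ref{pfaff_CB_ident}) up to projectivisation, in both directions.
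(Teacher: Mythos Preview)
Your outline follows the paper's proof essentially step for step: reduce to $\sigma=f_I\wedge *e_J$, identify the right-hand side with $\det(A_{(I|J)})$ via Lemma~\ref{affinePlucker}, expand $\Ca_V(w^0)$ by (\ref{cartan_map_exp}) and (\ref{kappa_alpha_pfaffian}), and reduce everything to a single scalar identity comparing $(\Gamma_\sigma\, e_K,*e_L)$ with $\Delta_{(IJ|KL)}$. In the paper this scalar identity is isolated as Lemma~\ref{hat_Delta_Delta_rel}, and its proof is where the real work lies.

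There is one genuine gap in your sketch. You write ``Factoring $\Gamma_\sigma=\Gamma_{f_I}\Gamma_{*e_J}$ by (\ref{Gamma_FI_GJ_antisym})'', but (\ref{Gamma_FI_GJ_antisym}) says only that $\Gamma_{f_I\wedge *e_J}=\AA(\Gamma_{f_I}\Gamma_{*e_J})$, the \emph{antisymmetrisation} of the product. The raw product $\Gamma_{f_I}\Gamma_{*e_J}$ is \emph{not} equal to $\Gamma_\sigma$ whenever $I\cap\tilde J\neq\emptyset$, because $\psi^\dag_i$ and $\psi_i$ fail to anticommute. This is precisely the mechanism that produces the factor $2^{-r}$ on the Cartan side: in the antisymmetrised sum only those orderings survive in which, for each $i\in I\cap\tilde J$ (equivalently $i\in A\sqcup B$ in the decomposition (\ref{IJKL_ABC})), the creation and annihilation operators appear in the correct order, cutting the sum by $2^{-(a+b)}=2^{-(r-s)}$; combined with the $2^s$ already present in $\Delta_{(IJ|KL)}$ this yields the $2^{-r}$ of (\ref{pfaff_CB_ident}). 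Your paragraph on ``the power of $2$'' only manipulates the right-hand side of (\ref{pfaff_CB_ident}) and never locates this factor on the $\beta_N$ side, so as written the two sides would not match. Once you restore the antisymmetrisation and track it through the computation (this is what the paper does in the proof of Lemma~\ref{hat_Delta_Delta_rel}, first in the canonical ordering (\ref{ABCDS_ordering}) and then checking that both $\widehat\Delta$ and $\Delta$ pick up the same $(-1)^{\nu(I,J,K,L)}$ under a general shuffle), your argument goes through and coincides with the paper's.
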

\begin{proof}
As shown in Section \ref{proof_main_theorem},  eq.~(\ref{beta_sigma_Ca_Ca_dual}) is equivalent to  verifying
\be
(\Gamma_{f_I\wedge *e_{J}} C(\Ca_V(w^0)), \Ca_V(w^0))) = 
\langle f_I \wedge *e_{J} | \Pl_V(w^0)\rangle .
\label{gamma_f_I_wedge_e_star_J_Ca_diag_bis}
\ee
for all $\sigma$ of the form
\be
\sigma= f_I \wedge *e_{{J}},
\label{sigma_f_I_star_e_J_bis}  
\ee
Lemma (\ref{affinePlucker}) gives 
\be
 \langle f_I \wedge *e_{J} | \Pl_V(w^0)\rangle =(-1)^{r(N-r)}\NN_N \, \det(A_{(I | J)}),
 \label{plucker_det_A_IJ_new}
 \ee
 where $\NN_N\neq 0$ is any projective normalization factor. Making the identifications (\ref{IJ_alpha_beta}), (\ref{KL_gamma_delta}), 
 substituting (\ref{plucker_det_A_IJ_new}) into the RHS of eq.~(\ref{gamma_f_I_wedge_e_star_J_Ca_diag_bis}), 
and (\ref{cartan_map_exp}) twice into the LHS, 
using  eq.~(\ref{kappa_alpha_pfaffian}) to express the Cartan coordinates as Pfaffians,  
and (\ref{C_Hodge_sign}), (\ref{C_squared}), to relate the map $C$ to Hodge $*$, and using
the fact that $k$ and $l$ are even gives:
\be
 \det(A_{(I | J)}) ={(-1)^{{1\over 2}N(N+1)} \over \NN_N}\sum_{K, L | k, l \text{ even} \atop{ {K\cup L = I \cup J \atop K\cap L = I\cap J}}}
 (-1)^{l \over 2} \widehat{\Delta}_{(I J | K L)}  \Pf(A_{(K | K)})\Pf( A_{(L | L)}) ,
\label{tilde_pfaff_CB_ident_hat}
\ee
where
\be
\widehat{\Delta}_{(I J | K L)} :=(\Gamma_{f_I\wedge *e_{{J}}}e_{K}, *e_{{L}}).
\label{tilde_delta_IJKL}
\ee
Lemma \ref{hat_Delta_Delta_rel}  below shows that this is equivalent to:
\be
\det(A_{(I | J)}) = {(-1)^{{1\over 2}N(N+1)} \over \NN_N}\sum_{K, L | k, l \text{ even} \atop{ {K\cup L = I \cup J \atop K\cap L = I\cap J}}}  
(-1)^{{l\over 2 } }{\Delta}_{(I J | K L)}  \Pf(A_{(K | K)})\Pf( A_{(L | L)}). 
\label{tilde_pfaff_CB_ident}
\ee
To  obtain the correct normalization, it is sufficient to evaluate this for one specific choice of $(I | J)$. But for any $I=J$, we have
\be
 \det(A_{(I | I)}) = \Pf(A(I|I))^2,
 \ee
so 
 \be
 \NN_N = (-1)^{{1\over 2}N(N+1)},
\ee
and we obtain eq.~(\ref{pfaff_CB_ident}).
\end{proof}
\begin{lemma}
\label{hat_Delta_Delta_rel}
Let $I$ and $J$ have the same cardinality $i=j=r$, and let $K$, and hence also $L$, have even cardinalities $(k, l)$.
Then  $\widehat{\Delta}_{(I J | K L)}$ is nonzero if and only if eq.~(\ref{IJKL_cond}) holds and, in that case,
\be
  \widehat{\Delta}_{(I J | K L)} =  \frac{ (-1)^{ {1\over 2} r(r-1) }}{2^{ r} }\Delta_{(I J | K L)}.
  \label{hat_Delta_Delta}
\ee
  \end{lemma}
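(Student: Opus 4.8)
The plan is to make the Clifford operator $\Gamma_{f_I\wedge *e_J}$ fully explicit, read off from it both the nonvanishing criterion and the value of $\widehat\Delta_{(IJ|KL)}=(\Gamma_{f_I\wedge *e_J}\,e_K,\,*e_L)$, and then match the result against the formula for $\Delta_{(IJ|KL)}$ already obtained in Lemma~\ref{Delta_IJKL_result }. Writing $*e_J=\sgn(J)\,e_{\tilde J}$ and using eq.~\eqref{Gamma_FI_GJ_antisym}, we have $\Gamma_{f_I\wedge *e_J}=\sgn(J)\,\AA\!\big(\psi^\dag_{I_1}\cdots\psi^\dag_{I_r}\,\psi_{\tilde J_1}\cdots\psi_{\tilde J_{N-r}}\big)$, antisymmetrized over all $N$ factors. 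An index $m$ occurs here only as an annihilator $\psi^\dag_m$ when $m\in I\cap J$, only as a creator $\psi_m$ when $m\in(I\cup J)^{c}$, and as a matched pair $\psi^\dag_m\psi_m$ when $m\in I\setminus J$; since distinct indices anticommute and $\{\psi^\dag_m,\psi_m\}=1$, the antisymmetrization leaves the creator-only and annihilator-only parts intact up to one global reordering sign and replaces each matched pair by $\psi^\dag_m\psi_m-\tfrac12$.

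First I would settle the ``if and only if''. Acting on $e_K$, the annihilators kill $e_K$ unless $I\cap J\subseteq K$, the creators kill it unless $K\subseteq I\cup J$, and when both hold each factor $\psi^\dag_m\psi_m-\tfrac12$ is a nonzero scalar; hence $\Gamma_{f_I\wedge *e_J}e_K$ is a nonzero multiple of the single basis vector $e_M$ with $M=(K\setminus(I\cap J))\cup(I\cup J)^{c}$, and vanishes otherwise. Pairing with $*e_L=\sgn(L)\,e_{\tilde L}$ is nonzero iff $M=\tilde L$, and a short set-theoretic check shows that ``$I\cap J\subseteq K\subseteq I\cup J$ together with $M=\tilde L$'' is equivalent to \eqref{IJKL_cond} (when moreover $k+l\neq 2r$ both sides vanish by a degree count). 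Assuming \eqref{IJKL_cond} and invoking the decomposition $T=A\sqcup B\sqcup C\sqcup D\sqcup S$ of Remark~\ref{IJKL_interdep} (so $I\setminus J=A\sqcup B$, $|I\setminus J|=r-s$, $M=\tilde L$), one gets $\Gamma_{f_I\wedge *e_J}e_K=\varepsilon\,2^{-(r-s)}e_{\tilde L}$ for an explicit sign $\varepsilon$ (a $-\tfrac12$ for each $m\in A\subseteq K$, a $+\tfrac12$ for each $m\in B$, together with the reordering sign and $\sgn(J)$). Since $\Delta_{(IJ|KL)}$ carries the factor $2^{s}$ (Lemma~\ref{Delta_IJKL_result }), the powers of $2$ on the two sides of \eqref{hat_Delta_Delta} already agree, and only the sign remains.

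For the sign I would imitate the proof of Lemma~\ref{Delta_IJKL_result }: apply the shuffle permutation of $\{1,\dots,N\}$ putting $(A,B,C,D,S)$ into the standard order~\eqref{ABCDS_ordering}. This multiplies $\widehat\Delta_{(IJ|KL)}$ by the product of the reordering signs of that shuffle restricted to $I$, to $J$ (acting inside $e_{\tilde J}$ and $\sgn(J)$), to $K$, and to $L$ (inside $e_{\tilde L}$ and $\sgn(L)$), which collect to exactly the factor $(-1)^{\nu(I,J,K,L)}$ of \eqref{nu_IJKL_def} — the same factor by which $\Delta_{(IJ|KL)}$ changes. It therefore suffices to verify \eqref{hat_Delta_Delta} in the standard position, where $\Delta^{0}_{(IJ|KL)}=(-1)^{(b+s)c+bs+d+s}2^{s}$ and $\widehat\Delta^{0}_{(IJ|KL)}$ is a finite, block-structured evaluation of $(\Gamma_{f_I\wedge *e_J}e_K,*e_L)$ with the index sets in the order~\eqref{ABCDS_ordering}; comparing the two — the residual $(-1)^{r(r-1)/2}$ emerging from the Hodge-star signs and the normal-ordering of the Clifford product, with \eqref{C_Hodge_sign}--\eqref{C_squared} available to organize it — yields the claim. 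As an independent check and a way to pin down the overall sign (exactly as $\NN_N$ is pinned in Proposition~\ref{equiv_main_theorem_pfaff_CB_ident}), one can take $I=J=K=L=\{1,\dots,r\}$, where $\Delta_{(IJ|KL)}=(-2)^{r}$ and $\widehat\Delta_{(IJ|KL)}=(\Gamma_{f_I\wedge *e_I}e_I,\,*e_I)$ is immediate.

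The hard part is entirely this sign bookkeeping: reconciling $\sgn(J)$, $\sgn(L)$, the Hodge-star signs, the reordering sign incurred when $\psi^\dag_{I_1}\cdots\psi^\dag_{I_r}\psi_{\tilde J_1}\cdots\psi_{\tilde J_{N-r}}$ is antisymmetrized and brought into normal order, and the signs of the $-\tfrac12$ terms, and confirming that the combinatorial contribution $(-1)^{\nu}$ (equivalently $\sgn(I,J,K,L)$) cancels against its counterpart in $\Delta_{(IJ|KL)}$, leaving precisely $(-1)^{r(r-1)/2}/2^{r}$. Everything else is a degree count or a finite verification in standard position.
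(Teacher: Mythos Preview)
Your proposal is correct and follows the same strategy as the paper: establish the nonvanishing criterion set-theoretically, reduce via the shuffle permutation to the standard ordering~\eqref{ABCDS_ordering} (showing that both $\widehat\Delta$ and $\Delta$ pick up the same factor $(-1)^{\nu(I,J,K,L)}$), and then verify the ratio in that standard position. Your device of writing the antisymmetrized matched pair as $\psi^\dag_m\psi_m-\tfrac12$ is a mild repackaging of the paper's argument --- the paper instead reorders the Clifford product by hand and observes that, upon antisymmetrization, only half the relative orderings of each matched pair survive when applied to $e_K$, producing the same factor $2^{-(a+b)}=2^{-(r-s)}$ --- and the paper then carries out explicitly the standard-position sign bookkeeping (tracking $\sgn(J)$, $\sgn(L)$ and the Clifford reordering signs) that you outline as the remaining work.
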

  \begin{proof}
First note that, if $i \neq j$, the product  $\Gamma_{f_i} \Gamma_{e_j}$  acting on $e_K$  gives a multiple of a homogeneous basis element
\be
 \Gamma_{f_i} \Gamma_{e_j} e_K =  -\Gamma_{e_j} \Gamma_{f_i} e_K =  \kappa e_{M}, \quad \kappa = \pm 1 \, \text{ or } 0
 \ee
  for some $M$ of the same cardinality as $K$  If $i=j$, either  $\Gamma_{f_i} \Gamma_{e_j} e_K$  or 
  $\Gamma_{e_j} \Gamma_{f_i} e_K $ vanishes, and the other equals $e_K$.
       It follows that $\Gamma_{f_I\wedge e_{\tilde{J}}}e_{K}$ is a monomial, and hence a multiple 
   \be
  \Gamma_{f_I\wedge *e_{J}}e_{K}= \gamma \cdot e_{P}
  \ee
   for some scalar $\gamma$ and some  ordered subset $P\subseteq\{1, \dots, N\}$. 
   
  If   $\hat{\Delta}_{(I J | K L )} \neq 0$, we must therefore have $\gamma \neq 0$ and $P=\tilde{L}$, so 
\be
  \Gamma_{f_I\wedge *e_{J}}e_{K}  = \sgn(J) \Gamma_{f_I\wedge e_{\tilde{J}}}e_{K} = \pm e_{\tilde{L}}.
  \label{Gamma_IJKL}
\ee
   We next show that this  is equivalent to the set theoretic equalities (\ref{IJKL_cond}).
First, if  $l\in K$, then either $l \in I$ (and $e_l$ is removed by $\Gamma_{f_I}$) or $l \in J$ 
(since otherwise, if $l\not\in I$,  $e_l$ would  be annihilated by $\Gamma_{e_{\tilde{J}}}$). Therefore  $K\subseteq I\cup J$.
 It also follows from (\ref{Gamma_IJKL}) that if $\tilde{l}\not\in I$, and $\tilde{l}\not\in J$, then 
 $\tilde{l} \in \tilde{L}$. Therefore $\tilde{I}\cap \tilde{J}\subseteq \tilde{L}$, $L\subseteq I\cup J$ and hence $K \cup L \subseteq I \cup J$. Conversely, 
 from (\ref{Gamma_IJKL}),   if $\tilde{l}\in \tilde{L}$ and $\tilde{l}\not\in K$  then  $\tilde{l}\in \tilde{I} \cap \tilde{J}$.
Therefore $  \tilde{K}\cap\tilde{L} \subseteq \tilde{I}\cap \tilde{J} $ and hence  $I \cup J \subseteq  K \cup L$.
Combining gives $I\cup J = K\cup L$. The proof that $I\cap J = K\cap L$ follows similarly.

This means that we can decompose the subsets $I,J,K,L$ in the same way as for the case ${\Delta}_{(I J | K L)}$ above. 
Let us first consider the case of $A,B,C,D,S$ chosen as in (\ref{ABCDS_ordering}).  
For this case, we denote the quantity $\hat{\Delta}(IJ | KL)$ as  $\hat{\Delta}^0(IJ | KL)$. If 
\be
M=\{m_1<m_2<...<m_s\},
\ee 
we have
\be
\Gamma_{f_M} = \Gamma_{f_{m_1}}\Gamma_{f_{m_2}}...\Gamma_{f_{m_s}}, \quad \Gamma_{e_M} = \Gamma_{e_{m_1}}\Gamma_{e_{m_2}}...\Gamma_{e_{m_s}},
\ee
 so there is just one product. 
 
Changing $*e_J, *e_L$ to $e_{\tilde{J}} , e_{\tilde{L}}$, we  first consider 
\be
\widetilde{\Delta}^0_{(I J | K L)} :=(\Gamma_{f_I\wedge e_{\tilde{J}}}e_{K}, e_{\tilde{L}}) = \sgn(J) \sgn(L) \hat{\Delta}^0_{(I J | K L)} 
\ee
Applying $\Gamma_{f_I\wedge e_{\tilde{J}} }$  to $e_K$ gives the same result as the antisymmetrization of 
 \be
  \Gamma_{f_A} \Gamma_{f_B} \Gamma_{f_S} \Gamma_{e_A} \Gamma_{e_B} \Gamma_{e_{\tilde T}}
  \ee
applied to $e_K$.
Now note that antisymmetrization gives
\bea
\AA( \Gamma_{f_A} \Gamma_{f_B}  \Gamma_{f_S} \Gamma_{e_A} \Gamma_{e_B} \Gamma_{e_{\tilde T}}) 
&\&=(-1)^{(s+a)b}\AA( \Gamma_{f_A}  \Gamma_{f_S} \Gamma_{e_A} \Gamma_{f_B} \Gamma_{e_B} \Gamma_{e_{\tilde T}})\cr
&\&=(-1)^{(s+a)(b+a)}\AA( \Gamma_{f_S} \Gamma_{e_A} \Gamma_{f_A}  \Gamma_{f_B} \Gamma_{e_B} \Gamma_{e_{\tilde T}})
\eea
 and apply $ \Gamma_{f_S} \Gamma_{e_A} \Gamma_{f_A}  \Gamma_{f_B} \Gamma_{e_B} \Gamma_{e_{\tilde T}}$  to $e_K = e_A\wedge e_C\wedge e_S$:
  \bea 
&\& (-1)^{(s+a)(b+a)} \Gamma_{f_S}  \Gamma_{e_A} \Gamma_{f_A}  \Gamma_{f_B} \Gamma_{e_B} \Gamma_{e_{\tilde T}}(e_A\wedge e_C\wedge e_S)\cr
&\&=(-1)^{(s+a)(b+a) + {\tilde t}(a+c+s)} \Gamma_{f_S} \Gamma_{e_A} \Gamma_{f_A}  \Gamma_{f_B} \Gamma_{e_B} (e_A\wedge e_C\wedge e_S\wedge e_{\tilde T})\cr
&\&=(-1)^{(s+a)(b+a) + {\tilde t}(a+c+s)+  {1\over 2}a(a-1) + {1\over 2}b(b-1)} \Gamma_{f_S} (e_A\wedge e_C\wedge e_S\wedge e_{\tilde T})\cr
&\&=(-1)^{(s+a)(b+a) + {\tilde t}(a+c+s)+  {1\over 2}a(a-1) + {1\over 2}b(b-1) +(a+c)s + s(s-1)/2} (e_A\wedge e_C \wedge e_{\tilde T})\cr
&\&=(-1)^{(s+a)(b+a) + {\tilde t}(a+c+s)+  {1\over 2}a(a-1) + b(b-1)/2 +(a+c)s + {1\over 2}s(s-1)} (e_{\tilde L})
 \eea
Antisymmetrizing this would  give the same result, except that we can only retain permutations in which the factors  $ \Gamma_{e_A}$
  precede the factors $ \Gamma_{f_A}$ , and the factors $ \Gamma_{f_B}$ precede the factors $ \Gamma_{e_B}$.
  This reduces the sum by a total factor of $2^{-(a+b)}$, giving
\be
\widetilde{\Delta}^0_{(I J | K L)} ={(-1)^{(s+a)(b+a) + {\tilde t}(a+c+s)+  {1\over 2}a(a-1) + {1\over 2}b(b-1) +(a+c)s + {1\over 2}s(s-1)}\over 2^{ (a+b)} }.
\ee
The quotient is therefore
\be
\frac{\widetilde{\Delta}^0_{(I J | K L)} }{{\Delta}^0_{(I J | K L)} } ={(-1)^{(s+a)(b+a) + \tilde t(a+c+s)+  {1\over 2}a(a-1)+{1\over 2} b(b-1)
 +(a+c)s + {1\over 2}s(s-1) +(b+s)c+ bs +d+s}\over 2^{ (a+b)+s} }.
\ee
 It remains only to simplify the sign. Using $ m^2\equiv m \ (\mod\, 2)$,  the fact that 
 \be
k= a+c+s, \quad l = b+d+s 
\ee
 are  both even  and $r =a+b+s $,  implies
 \bea
&\&(-1)^{(s+a)(b+a) +  {1\over 2}a(a-1) + {1\over 2} b(b-1) +(a+c)s + {1\over 2}s(s-1) +(b+s)c+ bs +b}\cr
&\& = (-1)^{a+ab+b +bc +    {1\over 2}a(a-1) + {1\over 2}b(b-1)  + {1\over 2}s(s-1)},
\eea
 and 
 \be
(-1)^{a(a-1)/2 + b(b-1)/2 +  {1\over 2}s(s-1)}= (-1)^{{1\over 2}r(r-1) + ab + as + bs},
 \ee
so the sign becomes
\be
\frac{\widetilde{\Delta}^0_{(I J | K L)} }{{\Delta}^0_{(I J | K L)} } ={(-1)^{ (r-s) +(r-s)s +bc +{1\over 2} r(r-1)} \over 2^{ r} }.
\label{tilde_Delta_Delta_rat}
\ee

Replacement of $ e_{\tilde J} $ by $*e_J$ involves multiplication by 
\be
\sgn(J)  = (-1)^{r(r-s)},
\ee
 and replacement of  $e_{\tilde L}$ by $*e_L$ introduces the further sign change
\be
\sgn(L) = (-1)^{bc}.
\ee
Combining with  (\ref{tilde_Delta_Delta_rat}), we obtain 
\be
(-1)^{ (r+s+1)(r-s) +{1\over 2} r(r-1)}= (-1)^{ (r+s+1)(r+s) + {1\over 2}r(r-1) }= (-1)^{  {1\over 2} r(r-1) },
\ee
and therefore
\be
\frac{\widehat{\Delta}^0_{(I J | K L)} }{{\Delta}^0_{(I J | K L)} } = \frac{  (-1)^{  {1\over 2} r(r-1) }}{2^{ r} }.
\label{Deltahat_Delta_rat_go}
\ee
Now suppose that $A,B,C,D,S$ are internally ordered as increasing, but not necessarily
chosen as in  (\ref{ABCDS_ordering}). Let $\sigma$ be the permutation of shuffle type that takes $A, B,C,D,S$ and puts them in the order (\ref{ABCDS_ordering}), preserving the internal order of each. The expression giving  ${\widehat{\Delta}_{(I J | K L)} }$ cannot depend on how the elements are named  provided that, in the elements $f_I, e_J, e_K, f_L$, the basis elements are kept in the order induced by $\sigma$. These elements  must, however, appear in the natural order $\{1,\dots,N\}$. Thus in $f_I$, we must slide the elements of $A$ past $B$ and $S$, and 
the elements of $B$ past the elements of $S$. The sign of $f_I$ therefore changes by $(-1)^{\mu(A,B) + \mu(A,S) + \mu(B,S)}$,
that of $e_J$ by $(-1)^{\mu(C,D) + \mu(C,S) + \mu(D,S)}$,
that of $e_K$ by $(-1)^{\mu(A,C) + \mu(A,S) + \mu(C,S)}$ and that of $f_L$ by $(-1)^{\mu(B,D) + \mu(B,S) + \mu(D,S)}$.
(The volume element defining $*$ can also change sign, but this appears twice in the formula, contributing no net sign change.)
The global sign change is therefore again $(-1)^{\nu(I,J, K,L)}$, as defined in \eqref{nu_IJKL_def}. In other words, 
$\widehat{\Delta}_{(I J | K L)} ,{\Delta}_{(I J | K L)} $ change sign in exactly the same way under the passage to the order (\ref{ABCDS_ordering}), and so
\be
\frac{\widehat{\Delta}_{(I J | K L)} }{{\Delta}_{(I J | K L)} } = \frac{  (-1)^{  {1\over 2} r(r-1) }}{2^{ r} }.
\label{Deltahat_Delta_rat }
\ee
 \end{proof}

As preparation for the direct proof of Theorem \ref{pfaffian_CB_identity},  
we introduce a second set of basis vectors for $V+V^{*}$
\be
g_i := e_i+ f_i, \quad h_i := e_i - f_i \qquad i=1,\dots, N,
\ee
and define the $2N$ component row vectors 
\be
({\bf e}, {\bf f}) =
(e_1, e_2, \hdots e_N,   f_1, f_2,  \hdots  f_N),
\quad
( {\bf g} , {\bf h}) =
(g_1,  g_2,  \hdots  g_N,  h_1,  h_2, \hdots  h_N),
\ee
whose entries are the basis elements $\{e_i, f_i\}_{1=1,\dots, N}$ and
$\{g_i, h_i\}_{1=1, \dots, N}$, respectively.
These are related by
\bea
({\bf g} ,  {\bf h}) &\&=({\bf e}, {\bf f})\  \II_{2N},\cr
({\bf e} ,  {\bf f}) &\&=\tfrac{1}{2}({\bf g}, {\bf h})\  \II_{2N}.
\label{gh_ef}
\eea

Let  $g_I$,  $h_{J}$ $e_K$ and $f_L$  denote  the exterior forms in $\Lambda (V+ V^*)$ 
associated to ordered subsets $I, J, K$ and $L$ of $\{1,2,\dots, N\}$, defined by: 
\bea
g_I &\&:= g_{I_1}\wedge \cdots \wedge g_{I_r}, \quad h_J := h_{J_1}\wedge \cdots \wedge h_{J_r}, \cr
e_K &\&:= e_{K_1}\wedge \cdots \wedge e_{K_k}, \quad f_L := f_{L_1}\wedge \cdots \wedge h_{L_l},
\label{g_I_h_J}
\eea
By the usual formulae for changes of bases, we then have
\begin{lemma}
\label{IJKL_cond_lemma}
 For all ordered subsets $I,J \subseteq \{1,2,\dots, N\}$ whose cardinalities add up to $2r$, we have
\be
g_I \wedge h_J =  \sum_{{K, L \atop {K\cup L = I \cup J \atop K\cap L = I\cap J}} } \Delta_{(I J | K L)} \, e_K \wedge f_L,
\label{g_I_h_J_exp}
\ee
and for the dual basis,
\be
(g_I \wedge h_J)^* = {1\over 2^{2r}} \sum_{{K, L \atop {K\cup L = I \cup J \atop K\cap L = I\cap J}} } \Delta_{(I J | K L)} \, (e_K \wedge f_L)^*.
\label{g_I_h_J_exp_dual}
\ee
\end{lemma}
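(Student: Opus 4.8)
The plan is to derive both identities from the standard description of how a change of basis acts on an exterior power, specialised to the two bases $(\mathbf{e},\mathbf{f})$ and $(\mathbf{g},\mathbf{h})$ of $V+V^{*}$ and the relations \eqref{gh_ef}. Recall that if two bases $\{u_a\}_{a=1}^{2N}$ and $\{v_b\}_{b=1}^{2N}$ are related by $v_b=\sum_a M_{ab}u_a$, then for any increasing multi-index $(b_1<\cdots<b_p)$ one has $v_{b_1}\wedge\cdots\wedge v_{b_p}=\sum_{a_1<\cdots<a_p}\det\bigl((M_{a_ib_j})_{1\le i,j\le p}\bigr)\,u_{a_1}\wedge\cdots\wedge u_{a_p}$, and the dual bases satisfy $v_c^{*}=\sum_a (M^{-1})_{ca}u_a^{*}$, so that the same expansion holds for the starred monomials with $M$ replaced by $M^{-1}$.

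The first step is to match wedge monomials with position multi-indices in $\{1,\dots,2N\}$. With $e_{N+i}:=f_i$ as in the text, the monomial $e_K\wedge f_L$ (with $K,L$ each internally increasing) is the basis vector of $\Lambda^{2r}(V+V^{*})$ indexed by the increasing sequence $(K_1,\dots,K_k,N+L_1,\dots,N+L_l)$ --- increasing because every $K_i\le N<N+L_j$ --- and every size-$2r$ subset of $\{1,\dots,2N\}$ arises this way for a unique pair $(K,L)$ with $|K|+|L|=2r$. Likewise $g_I\wedge h_J$ corresponds to the increasing ``column'' multi-index $(I_1,\dots,I_r,N+J_1,\dots,N+J_r)$ for the change of basis $(\mathbf{g},\mathbf{h})=(\mathbf{e},\mathbf{f})\,\II_{2N}$. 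Applying the expansion above with $u=(\mathbf{e},\mathbf{f})$, $v=(\mathbf{g},\mathbf{h})$, $M=\II_{2N}$, the coefficient of $e_K\wedge f_L$ in $g_I\wedge h_J$ is the determinant of the $2r\times 2r$ submatrix of $\II_{2N}$ with rows in positions $(K\,|\,N+L)$ and columns in positions $(I\,|\,N+J)$; this is the transpose of the matrix $\II_{(IJ|KL)}$, and since $\II_{2N}$ is symmetric its determinant is exactly $\Delta_{(IJ|KL)}$. Summing over all size-$2r$ subsets and invoking Lemma \ref{Delta_IJKL_abs} to drop the terms with $K\cup L\ne I\cup J$ or $K\cap L\ne I\cap J$ gives \eqref{g_I_h_J_exp}.

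The second step repeats this verbatim for the dual bases, now with $M$ replaced by $M^{-1}=\tfrac12\II_{2N}$, which is read off from the second line of \eqref{gh_ef} (equivalently from $\II_{2N}^{2}=2\,\Ib_{2N}$). This time no transposition is needed, since the rows of the relevant submatrix are indexed by $(I\,|\,N+J)$ and the columns by $(K\,|\,N+L)$, exactly as in $\II_{(IJ|KL)}$; pulling the scalar $\tfrac12$ out of each of the $2r$ rows shows the coefficient of $(e_K\wedge f_L)^{*}$ in $(g_I\wedge h_J)^{*}$ is $2^{-2r}\Delta_{(IJ|KL)}$, and Lemma \ref{Delta_IJKL_abs} again prunes the sum, yielding \eqref{g_I_h_J_exp_dual}.

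The only genuine bookkeeping is in the first step: one must check that the submatrix produced by the exterior-power expansion really is $\II_{(IJ|KL)}^{\,T}$, with rows and columns listed in the same order used to define $\Delta_{(IJ|KL)}$, and that the chosen orderings within the monomials $g_I\wedge h_J$ and $e_K\wedge f_L$ contribute no extra sign (they do not, precisely because $K_i\le N<N+L_j$, etc.). I expect this sign/transpose matching to be the main, and essentially the only, point requiring care; everything else is the routine change-of-basis formula on $\Lambda^{2r}(V+V^{*})$ and its dual.
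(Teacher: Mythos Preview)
Your proposal is correct and follows exactly the approach the paper itself takes: the paper's entire proof is the single sentence ``For \eqref{g_I_h_J_exp} this is simply the change of basis induced by the matrix $\II_{2N}$, and for the duals \eqref{g_I_h_J_exp_dual}, its inverse transpose which is a half of $\II_{2N}$,'' and what you have written is a careful unpacking of precisely that. Your identification of the relevant $2r\times 2r$ minor as $\II_{(IJ|KL)}^{\,T}$ (with the symmetry of $\II_{2N}$ making the transpose harmless) and the factor $2^{-2r}$ from $\II_{2N}^{-1}=\tfrac12\II_{2N}$ are exactly the points the paper leaves implicit.
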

For (\ref{g_I_h_J_exp}) this is simply the change of basis  induced by the matrix $\II_{2N}$, and for the duals (\ref{g_I_h_J_exp_dual}), its inverse transpose which is a half of $\II_{2N}$. We now proceed to the proof of Theorem \ref{pfaffian_CB_identity}.
\begin{proof}[Proof of Theorem  \ref{pfaffian_CB_identity}]
Define the 2-form
\be
\omega=:\sum_{k,l=1}^{N}A_{kl}g_k\wedge h_l.
\ee
The skew symmetry of $A$ implies
\be
\omega= 2\sum_{1\leq k<l\leq N}A_{kl}\left(e_k\wedge e_l-f_k\wedge f_l\right) = 2(\omega_1-\omega_2),
\ee
where
\be
\omega_1 := \sum_{1\leq i<j\leq N}A_{ij}e_i\wedge e_j,\qquad \omega_2 := \sum_{1\leq i<j\leq N}A_{ij}f_i\wedge f_j.
\ee
For any $1\leq r \leq N$, the $r$th wedge power of $\omega$ can be written as
\begin{eqnarray}
\omega^{\wedge r}
&\&=\left(\sum_{i,j=1}^{N}A_{ij}g_i\wedge h_j\right)^{\wedge r}\cr
&\&=\sum_{i_1,j_1,\dots, i_r,j_r}\prod_{k=1}^{r}A_{i_k j_k}g_{i_1}\wedge h_{j_1}\wedge \cdots \wedge g_{i_r}\wedge h_{j_r}\cr
&\&=(-1)^{{1\over 2}r(r-1)}\sum_{i_1,j_1,\dots, i_r,j_r}\prod_{k=1}^{r}A_{i_k j_k}g_{i_1}\wedge \cdots \wedge g_{i_r} \wedge
 h_{j_1}\wedge \cdots \wedge h_{j_r}\cr
&\&=(-1)^{{1\over 2}r(r-1)}\sum_{I,J}\left(\sum_{\pi,\rho \in S_r}\mathrm{sgn}(\pi)\mathrm{sgn}(\rho) \prod_{k=1}^{r}A_{i_{\pi(k)}j_{\rho(k)}}\right)g_{I} \wedge h_{J}\cr
&\&=(-1)^{{1\over 2}r(r-1)}r!\sum_{I,J}\det(A_{(I |J)})g_{I} \wedge h_{J},
\end{eqnarray}
where the sign factor on the third line is obtained by moving the $h_j$ factors to the right, and the sums
 in the fourth and fifth lines are over all pairs of subsets $(I,J)$ of $\{1,2,\dots, N\}$ of cardinality $r$.

On the other hand, since 2-forms commute, the binomial formula implies 
\begin{eqnarray}
\omega^{\wedge r}
&\&=2^r\left(\omega_1-\omega_2\right)^{\wedge r}\cr
&\&=2^r\sum_{m=0}^{r}(-1)^{r-m} \binom{r}{m}\omega_1^{\wedge m} \wedge \omega_2^{\wedge (r-m)}\cr
&\&=2^r r!\sum_{m=0}^{r}(-1)^{r-m} \frac{\omega_1^{\wedge m}}{m!} \wedge \frac{\omega_2^{\wedge (r-m)}}{(r-m)!}\cr
&\&=2^r r!\sum_{m=0}^{r} \left(\sum_{K\atop |K|=2m}\Pf(A_{(K | K)})e_K\right) \wedge \left(\sum_{L\atop |L|=2(r-m)}(-1)^{|L|/2}\Pf(A_{(L | L)})f_L\right)\cr
&\&=2^r r!\sum_{K, L \atop{ k, l \text{ even} }}(-1)^{l \over 2}\Pf(A_{(K | K)})\Pf(A_{(L | L)})e_K\wedge f_L,
\end{eqnarray}
where the last sum is over all pairs of subsets $(K,L)$ of $\{1,2,\dots, N\}$ of even cardinalities $k$ and $l$
with
\be
k+l = 2r.
\ee
We therefore have the identity
\be
\det(A_{(I | J)})=(-1)^{{1\over 2}r(r-1)}2^r \sum_{K, L \atop{ k, l \text{ even} }}(-1)^{l\over 2}\Pf(A_{(K | K)})\Pf(A_{( L| L)}) 
(g_I\wedge h_J)^* \lrcorner (e_K\wedge f_L)
\ee
and Lemma \ref{IJKL_cond_lemma} implies eq.~\eqref{pfaff_CB_ident}.
\end{proof}

\br
\label{null_case}
If the cardinalities of $I$ and $J$ are different, the same calculation that leads to (\ref{pfaff_CB_ident}) yields the 
vanishing quadratic relations
\be
\sum_{K, L | k, l \text{ even} \atop{ {K\cup L = I \cup J \atop K\cap L = I\cap J}}}(-1)^{{l\over 2}+ b}\sgn(I,J,K,L)
 \Pf(A_{(K | K)})\Pf( A_{(L | L)}) =0,
 \ee
which are satisfied by the Pfaffians of principal minors of any skew $N\times N$ matrix $A$, valid for any pair $(I,J)$ of different cardinality.
In fact, this is nothing but another way to express the Cartan relations (\ref{BKP_cartan_equations}).
\er

\br
{\bf Function theoretic realizations.}
An interesting function theoretic realization of the bilinear relation between determinants of
minors of skew matrices and Pfaffians arises in terms of Riemann $\theta$ functions, using the space of second order  $\theta$ functions on the Prym variety of hyperelliptic curves as a model for the irreducible Clifford module \cite{VG}.  
Another instance, derived in \cite{HOr1} using fermionic methods, expresses  Schur functions as sums over products of Schur $Q$ functions.
 A large class of such identities, relating lattices of KP $\tau$-functions labelled by integer partitions to lattices of BKP $\tau$-functions, 
 is derived in \cite{HOr2}. 
 \er

 \bigskip
\noindent 
\small{ {\it Acknowledgements.} The authors would like to thank T.~Dinis da Fonseca and A.~Yu.~Orlov for helpful discussions.
This work was partially supported by the Natural Sciences and Engineering Research Council of Canada (NSERC). 

 \bigskip
\noindent 
\small{ {\it Data sharing.} 
Data sharing is not applicable to this article since no new data were created or analyzed in this study.
\bigskip

 
 \newcommand{\arxiv}[1]{\href{http://arxiv.org/abs/#1}{arXiv:{#1}}}

\end{document}